\newcommand{\ep}{\varepsilon}
\newcommand{\I}{\mathcal{I}}
\newtheorem{theorem}{Theorem}[section]
\newtheorem{proposition}[theorem]{Proposition}
\DeclareMathOperator{\arccosh}{arcCosh}
\def\BState{\State\hskip-\ALG@thistlm}
\def\@footnotecolor{red}
\def\@footnotemark{%
    \leavevmode
    \ifhmode\edef\@x@sf{\the\spacefactor}\nobreak\fi
    \stepcounter{Hfootnote}%
    \global\let\Hy@saved@currentHref\@currentHref
    \hyper@makecurrent{Hfootnote}%
    \global\let\Hy@footnote@currentHref\@currentHref
    \global\let\@currentHref\Hy@saved@currentHref
    \hyper@linkstart{footnote}{\Hy@footnote@currentHref}%
    \@makefnmark
    \hyper@linkend
    \ifhmode\spacefactor\@x@sf\fi
    \relax
  }%
\begin{document}
\title{Weighted mining of massive collections of $p$-values by convex optimization}

\author{Edgar Dobriban\footnote{E-mail: \texttt{dobriban@stanford.edu}.}}
\date{Stanford University}

\maketitle

\begin{abstract}
Researchers in data-rich disciplines---think of computational genomics and observational cosmology---often wish to mine large bodies
of $p$-values looking for significant effects,
while controlling the false discovery rate or
family-wise error rate.
Increasingly, researchers also wish to
prioritize certain hypotheses, for example those thought to have
larger effect sizes, by upweighting, and to impose constraints
on the underlying mining, such as monotonicity along a certain
sequence.

We introduce \emph{Princessp},
a principled method for performing
weighted multiple testing by constrained convex optimization.
Our method elegantly allows one to prioritize certain hypotheses through upweighting
and to discount others through downweighting,
while constraining the underlying
weights involved in the mining process.
When the $p$-values derive from
monotone likelihood ratio families like the Gaussian
means model, the new method allows exact solution
of an important optimal weighting problem
previously thought to be nonconvex and computationally infeasible.
Our method scales to massive dataset sizes.

We illustrate the applications of Princessp
on a series of standard genomics datasets and
offer comparisons with several previous `standard' methods. 
Princessp offers both ease of operation and
the ability to scale to extremely large problem sizes. 
The method is available as open-source software from 
\url{github.com/dobriban/pvalue_weighting_matlab}.


\end{abstract}


\section{Introduction}

\subsection{Large-scale inference}

Large-scale inference \citep{efron2012large} is the new term of art for an
emerging set of activities that crosscut all of scientific data analysis.
Owing to new data collection and computing technology, massive inference problems are appearing that
challenge traditional statistical approaches.

An emblem of the new era is the compilation of vast
collections of $p$-values to be mined and analyzed.

\begin{itemize}
\item 
Workers in computational biology, working with different forms of genome-derived data---gene expression,  Single Nucleotide Polymorphism (SNP), Copy Number Variation (CNV), etc.---routinely compute a $p$-value for each relevant genomic locus, providing
statistical evidence that the locus is associated with some outcome. When done for hundreds of thousands or even millions of
such sites, massive collections of $p$-values emerge \cite[see e.g.,][for representative examples]{tusher2001significance, burton2007genome}.

\item  Workers in extragalactic astronomy look for evidence of non-Gaussianity in the
cosmic microwave background (CMB), by compiling $p$-values for tests of non-Gaussianity
associated with different regions, orientations and scales of the CMB data.

\end{itemize}

The emergence of large collections of $p$-values is happening in field after field.
In fact scientific publication itself can now be viewed as a collection of numerical $p$-values:

\begin{itemize}
\item Researchers such as Ioannidis, Leek, and collaborators have conducted extensive text scraping exercises,
extracting  millions of $p$-values out of the
abstracts and bodies of published scientific papers \citep[e.g.][]{allen2008systematic,jager2013estimate}.
\end{itemize}

This emergence creates an urgent demand for tools
that can conveniently and reliably facilitate the main tasks of large-scale inference.

\subsection{Weighted testing of individual hypotheses}

A primary task in large-scale inference (LSI) is simultaneous testing of {\it individual null hypotheses}---the focus of this paper.
Intuitively, this task allows one to discover the `signals'  amid a sea of `noise' $p$-values, while providing some form of rigorous error control.

Many popular simultaneous significance tests are available, such as the Bonferoni method, Holm's method \citep{holm1979simple}, or the Benjamini-Hochberg method \citep{benjamini1995controlling},
but all suffer from a fundamental limitation: they treat the hypotheses as equal, and thus are unable to exploit important prior information about the likely significance of some hypotheses.
Incorporating prior knowledge into the multiple testing problem holds the promise of achieving higher power. 

Weighted Bonferroni testing of $p$-values is an increasingly popular method to prioritize certain effects over others.
We assign a weight $w_i$ to the $i$-th $p$-value, and reject the $i$-th null hypothesis if
$P_i/w_i \leq \alpha/\sum_i w_i$. The larger the weight, the less stringent the rejection threshold, and the easier it is to reject a null. See the review articles of \cite{roeder2009genome} and \cite {gui2012weighted}.

There is a growing trend to use weighted testing of $p$-values, especially in the biomedical literature,
where a number of successful applications to genome-wide association studies (GWAS) are listed in Table \ref{recent_apps}.

\begin{table}[h]
\centering
\caption{Some recent uses of prior information in biomedical studies}
\label{recent_apps}
\resizebox{0.9\columnwidth}{!}{
\begin{tabular}{|l|l|l|}
\hline
Source                                            & Current GWAS                 & Prior \\ \hline
\cite{saccone2007cholinergic}                  & nicotine dependence            & nicotine receptor status     \\ \hline
\cite{andreassen2013improved}                   & schizophrenia            & cardiovascular disease  GWAS      \\ \hline
\cite{li2013using} & childhood asthma    & eQTL \\ \hline
\cite{rietveld2014common} & cognitive performance    &educational attainment GWAS \\ \hline
\cite{fortney2015genome}  &extreme longevity        &age-related traits     \\ \hline
\cite{sveinbjornsson2016weighting}              & aggregate over broad set & annotation in established hits        \\ \hline
\end{tabular}
}
\end{table}

Unfortunately, the existing $p$-value weighting procedures have limited
ability to cope with important practical requirements.


\subsection{Optimal weighting and  obstacles to practitioner acceptance}

A motivational example arises when we can assign---say, based on
earlier experimental measurements---
 prior effect sizes $\mu_i$ to the different hypotheses. For instance, in biomedical research we may have access to data from related studies on the same trait. Our goal is to
perform weighted testing in a fashion that rigorously controls the
type I error, but offers optimal detection power,
under the alternative hypothesis that the specified effect sizes are true.
Equivalently, we want to apply weighted multiple testing of $p$-values with the
so-called Spj\o tvoll weights. See \cite{Spjotvoll1972optimality}, \cite{westfall1998using},  \cite{genovese2006false}, \cite{rubin2006method}, etc.

The Spj\o tvoll weights can have certain unusual properties which render them
unacceptable to practitioners. Among these are the fact that the weights themselves
need not be monotone in the effect sizes $\mu_i$; and the fact that some of the formally optimal weights can be
very close to zero, causing the weighted $p$-value statistic to behave in an unstable fashion.
Practitioners simply don't consider it safe and effective to rely on weights with such properties.
Existing approaches developed to cope with these unfortunate aspects of the optimal weights
involve nonconvex optimization \citep{westfall1998using,westfall2001optimally},
and so don't scale well to large problem sizes. So at the moment, there is no computationally effective way
to impose `practitioner' constraints such as monotonicity and strict positivity;
in the era of practical large scale of inference we cannot yet apply optimal weighting
to massive collections of $p$-values.



\subsection{Princessp: New approach based on convex optimization}

This paper develops \emph{Princessp}, a new approach to weighted Bonferroni multiple testing of $p$-values
which resolves the practitioner obstacles just mentioned. Princessp employs convex optimization and so
can scale to very large problem sizes. It allows linear inequality constraints to be imposed, constraints
which can include monotonicity of weights in the effect size and also lower bounds keeping all weights at 
a definite distance from zero. Hence the key priorities of practitioners---scale, monotonicity and
definite positiveness of the weights---are all handled by our new Princessp method.

In more detail, we now enumerate our main contributions: 


\begin{enumerate}

\item We formulate Princessp as the constrained convex optimization problem of computing $p$-value weights maximizing average power of the weighted Bonferroni method (Sec. \ref{cvx}). The concavity of the receiver-operating characteristic (ROC) is the key property enabling this approach. We establish concavity in the important setting of of two-sided tests in exponential families. 

\item We give several examples of practical convex constraints to be used with Princessp (Sec. \ref{examples}). Our framework allows us to consider in a unified way several types of weights (such as stratified or smooth weights) that have been considered in previous work, and also leads to new methods (such as bounded and monotone weights, Sec. \ref{mon}). 

\item We develop an efficient interior-point method for monotone bounded weights under normal observations (Sec. \ref{sec:optim}). The direct use of interior-point methods leads to a severely ill-conditioned KKT system, and is inapplicable to large-scale problems. Therefore, we develop a new subsampling algorithm to avoid the ill-conditioning (Sec. \ref{refinements}). We establish that subsampling is fast and accurate.

This allows us to efficiently solve large-scale problems with potentially tens of millions of hypotheses, such as those common in genomics. 
A MATLAB implementation of this method, as well as the code to reproduce our computational results is available at \url{github.com/dobriban/pvalue_weighting_matlab}, and will be provided in the software supplement. 

\item We illustrate Princessp on a series of standard genome-wide association study (GWAS) datasets (Sec. \ref{sec:data}). The boundedness of the weights seems to be important here, and leads to a good empirical performance compared to the state of the art weighted Bonferroni methods.
\end{enumerate}


We start in Sec.\ \ref{cvx} by explaining Princessp first in the Gaussian case, then more generally for monotone likelihood ratio families and two-sided tests. We give several concrete examples of convex constraints (Sec. \ref{examples}). We study the important special case of monotone $p$-value weights (Sec. \ref{mon}), illustrating their empirical performance on standard GWAS datasets and simulated data (Sec. \ref{sec:data}). Finally, we develop numerical methods to compute them (Sec. \ref{sec:mon_w}).

\section{Princessp: Convex $p$-value weights}
\label{cvx}

To explain our framework, we start with the normal means model. Suppose we observe test statistics $T_i \sim \mathcal{N}(\mu_i,1)$, $i=1,\ldots ,J$ and want to test the individual hypotheses $H_i:\mu_i \ge 0$ against $\mu_i <0$ simultaneously. Here $\mu_i$ are the standardized effect sizes, which are arbitrary fixed parameters.

We construct $p$-values $P_i = \Phi(T_i)$, and reject the $i$-th null $H_i$ if $P_i \le q w_i$, with $w_i\ge0$ and $\sum w_i = J$. This is the weighted Bonferroni procedure, which controls the family-wise error rate---the probability of making any false rejections---at level $\alpha = Jq$. The weights are based on independent prior data. Without loss of generality we assume $w_i\le 1/q$.

To optimize the choice of weights, it is customary to proceed in two stages. First we assume that we know $\mu_i$, and derive formally optimal ``oracle'' weights $w_i = w(\mu_i)$ depending on the unknown parameters. Second, when $\mu_i$ are unknown but we can construct estimators $\hat \mu_i$ based on some empirical data, we use the weights $w(\hat \mu_i)$. In this paper, we will focus on the first stage. In practice this corresponds to oracle weighting conditioning on $\hat\mu_i$. Thus we do weighting as if the true parameters were $\hat \mu_i$, but we do not use those estimates for other statistical inference tasks. We return to this issue at the end of the section. 

If $\mu_i$ are known, we can assign zero weight to any non-negative $\mu_i$; so in the remainder we will assume that all $\mu_i<0$. We can choose the remaining weights to maximize the expected number of discoveries. The statistical power to reject $H_i$ when the true parameter is $\mu_i$ equals $\Phi\left(\Phi^{-1}\left(qw_i\right)-\mu_i\right)$, leading to the optimization problem 
\begin{equation}
\label{Spjotvoll_weights}
 \max_{w\in [0,1/q]^J} \mbox{  } \sum_{i=1}^{J} \Phi\left(\Phi^{-1}\left(qw_i\right)-\mu_i\right) \,\,\, \mbox{   s.t. }  \sum_{i=1}^{J} w_i = J.
\end{equation}

This class of weighted multiple testing problems was studied first by \cite{Spjotvoll1972optimality}, thus we call this the Gaussian \emph{Sp\o tvoll weights} problem.
\cite{rubin2006method} and  \cite{roeder2009genome} showed that the weights are $w_i = \Phi(\mu_i/2 + c/\mu_i)/q$, for the unique $c>0$ such that the weights sum to $J$. They studied the problem parametrized by the critical value $c_i = \Phi^{-1}(qw_i)$ instead of the weight $w_i$. In that parametrization, the objective is not concave, and therefore it is unclear how to efficiently incorporate additional constraints. As a key observation for this paper, in the weight parametrization, the function $w \to \Phi\left(\Phi^{-1}\left(w\right)-\mu\right)$ is the ROC curve of the Gaussian test, and is strictly concave if $\mu<0$ \citep[e.g.,][p. 101]{lehmann2005testing}.

Our Princessp method takes this observation to its natural conclusion, allowing arbitrary convex constraints for the weights. Specifically, let $f_k: [0,1/q]^J\to \mathbb{R}$, $k=1, \ldots, K$ be convex and twice continuously differentiable functions, encoding convex inequality constraints $f_k(w) \le 0$. Let $A$ be an $L \times J$ matrix encoding equality constraints $Aw =b$. We let the $1 \times J$ vector $(1,1,\ldots,1)$ be the first row of $A$, and $J$ be the first entry of $b$. 
We assume $\rank A <J$. 
In the following examples, one can easily check these conditions.

The Princessp problem of computing constrained optimal weights for Bonferroni multiple testing is thus, in the Gaussian case: 
\begin{align*}
 \max_{w\in [0,1/q]^J} \mbox{  } & \sum_{i=1}^{J} \Phi\left(\Phi^{-1}\left(qw_i\right)-\mu_i\right) \,\,\, \\
   \mbox{   s.t. }  &f_k(w) \le 0, \,\,\,\,\,\, k = 1,\ldots,K \\
   &Aw=b. 
\end{align*}

More generally, beyond one-sided Gaussian tests, we replace $\Phi\left(\Phi^{-1}\left(qw_i\right)-\mu_i\right)$ with the appropriate ROC curves. For instance, if we want to perform two-sided tests of a null hypothesis $\mu_i=0$ against $\mu_i \neq 0$, then the analogous objective function is
\begin{equation*}
 \sum_{i=1}^{J} \left [\Phi\left(\Phi^{-1}\left(qw_i/2\right)-\mu_i\right) + \Phi\left(\Phi^{-1}\left(qw_i/2\right)+\mu_i\right) \right].
\end{equation*}
We will show in Proposition \ref{two_exp} that the objective is concave, despite only one term in each pair being concave in $w_i$. That result applies more generally to continuous exponential families. For one-sided tests, we can handle even more general observations, arising from monotone likelihood ratio families (Sec. \ref{mlr}).

One of the most important convex constraints is \emph{monotonicity}. For Gaussian data, suppose that our guesses for the means are sorted such that $0 < |\mu_1| \le |\mu_2| \le  \ldots \le  |\mu_J|$. Monotonicity requires that larger absolute effects $|\mu_i|$ have a larger weight, so that $w_1 \le w_2 \le  \ldots \le  w_J $. Unconstrained Spj\o tvoll weights are \emph{not} monotone in general. However, monotonicity is clearly desirable in practice. We discuss monotone weights in Sec. \ref{mon}.

In Sec. \ref{sec:optim}, we will also explain how to solve the Gaussian problem numerically. Despite its simplicity, the problem has interesting numerical aspects. We show that the log-barrier interior point method converges \emph{in theory}, but it has serious numerical difficulties; namely, a severely ill-conditioned KKT system. For monotone weights, we show how to avoid ill-conditioning using a subsampling approach.

Returning to the issue mentioned above, in practice we will have some empirical estimates $\hat\mu_i$ of the effect sizes, and solve the Princessp problem with those values substituted for $\mu_i$, effectively conditioning on $\hat\mu_i$. Thus, in practice we will use the optimization formulation merely as a means to efficiently specify an algorithm for use on real data. The formal optimality properties of those plug-in weights are beyond our present scope.

\subsection{Monotone likelihood ratio}
\label{mlr}

For one-sided tests we have convex problems very generally in monotone likelihood ratio (MLR) families. Suppose $f_{\theta}(x)$ is a real-parameter family of densities with respect to Lebesgue measure, supported on a common open interval $\I=(a,b)$, where the endpoints can be infinite. We assume that $f_{\theta}(x)>0$ for all $x \in \I$, so that the distribution functions $F_{\theta}(x) = \int_{a}^{x} f_{\theta}(y) dy$ are differentiable and strictly increasing on $\I$. Then the inverse functions $F_{\theta}^{-1}(x)$ are well-defined for $x\in(0,1)$.  Our key assumption is that $f_{\theta}(x)/f_{\theta'}(x)$ has  a monotone increasing likelihood ratio in $x$, for $\theta>\theta'$.

Define the ROC curve $G(x):=G_{\theta,\theta'}(x) = F_{\theta}[F_{\theta'}^{-1}(x)]$. This gives the power at $\theta$ of a one-sided test with rejection region $X \le c$, as a function of the level $x$ under $f_{\theta'}$.

\begin{proposition}[MLR and concavity, e.g., \cite{lehmann2005testing} p. 101]
\label{mlr_ccv} Suppose $f_{\theta}(x)$ is a family of densities with monotone increasing likelihood ratio in $x$, and satisfies the assumptions above. Then the ROC curve $G_{\theta,\theta'}(x)$ is strictly convex in $x$ for $\theta>\theta'$, and strictly concave for $\theta<\theta'$.

\end{proposition}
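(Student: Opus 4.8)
The plan is to reduce the statement to a one-line derivative computation: the slope of the ROC curve is exactly the likelihood ratio, read off along a monotone reparametrization. First I would introduce the change of variable $u = F_{\theta'}^{-1}(x)$, so that $x = F_{\theta'}(u)$ and $G(x) = F_{\theta}(u)$. Since $f_{\theta'}(u)>0$ on $\I$ by assumption, $F_{\theta'}$ is a strictly increasing $C^1$ bijection of $\I$ onto $(0,1)$, so the inverse function theorem applies and $F_{\theta'}^{-1}$ is differentiable with $(F_{\theta'}^{-1})'(x) = 1/f_{\theta'}(u)$.

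Differentiating $G(x) = F_{\theta}(F_{\theta'}^{-1}(x))$ by the chain rule then gives
\begin{equation*}
G'(x) = \frac{f_{\theta}(u)}{f_{\theta'}(u)} = \frac{f_{\theta}(F_{\theta'}^{-1}(x))}{f_{\theta'}(F_{\theta'}^{-1}(x))}.
\end{equation*}
This is the crux: $G'(x)$ is nothing but the likelihood ratio $f_{\theta}/f_{\theta'}$ evaluated at the point $u = F_{\theta'}^{-1}(x)$. Because $F_{\theta'}^{-1}$ is strictly increasing in $x$, the monotonicity of $G'$ in $x$ has the same direction as the monotonicity of the likelihood ratio in its argument $u$.

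It then remains only to invoke the MLR hypothesis. For $\theta>\theta'$ the ratio $f_{\theta}/f_{\theta'}$ is increasing in $u$, so $G'$ is increasing in $x$ and $G$ is convex; for $\theta<\theta'$, applying the MLR assumption to the pair $\theta'>\theta$ shows $f_{\theta'}/f_{\theta}$ is increasing, hence $f_{\theta}/f_{\theta'}$ is decreasing, so $G'$ is decreasing and $G$ is concave. I expect the only real subtlety to lie in the word \emph{strictly}: to upgrade ``convex/concave'' to ``strictly convex/concave'' I would use that strict MLR makes the likelihood ratio strictly monotone, which forces $G'$ to be strictly monotone and hence rules out $G$ being affine on any subinterval. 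The remaining obstacle is purely bookkeeping the regularity---confirming that $F_{\theta'}^{-1}$ is genuinely differentiable on all of $(0,1)$ (guaranteed by $f_{\theta'}>0$), so that the single-derivative argument is rigorous rather than merely formal.
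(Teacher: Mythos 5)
Your proposal is correct and follows essentially the same route as the paper's own proof: differentiate $G(x)=F_{\theta}[F_{\theta'}^{-1}(x)]$ to get $G'(x)=f_{\theta}[F_{\theta'}^{-1}(x)]/f_{\theta'}[F_{\theta'}^{-1}(x)]$, and read off strict monotonicity of $G'$ from the MLR property composed with the strictly increasing map $F_{\theta'}^{-1}$. Your version merely spells out the inverse-function-theorem and strictness bookkeeping that the paper leaves implicit.
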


\begin{proof}
Part of this result is stated, but not proved, on  p. 101 of \cite{lehmann2005testing}; therefore we provide a proof. We have $G'(x) = f_{\theta}[F_{\theta'}^{-1}(x)]/f_{\theta'}[F_{\theta'}^{-1}(x)]$. Hence, $G'$ is strictly increasing for $\theta>\theta'$, and strictly decreasing if $\theta<\theta'$. Therefore, $G$ is strictly convex for $\theta>\theta'$, and strictly concave for $\theta<\theta'$.
\end{proof}

Suppose now that we have test statistics $T_i \sim f_{\theta_i}$ and we are testing $\theta_i \ge \theta_{0i}$ against $\theta_i < \theta_{0i}$ simultaneously. As is well known, the uniformly most powerful individual test rejects for $T_i < c_i$. With the $p$-values $P_i = F_{\theta_{0i}}(T_i)$, the weighted Bonferroni test rejects the $i$-th null if $P_i \le q w_i$. 

As before, suppose we have prior guesses $\theta_{1i} < \theta_{0i}$ for the parameters. The power of the $i$-th test at  $\theta_{1i}$ is $G_{\theta_{1i},\theta_{0i}}(q w_i) = F_{\theta_{1i}}[F_{\theta_{0i}}^{-1}(q w_i)]$. To optimize the average power of the weighted Bonferroni method for this set of alternatives, we must solve
\begin{equation}
\label{mlr_weights}
 \max_{w\in [0,1/q]^J} \mbox{  } \sum_{i=1}^{J} G_{\theta_{1i},\theta_{0i}}(qw_i) \,\,\, \mbox{   s.t. }  \sum_{i=1}^{J} w_i = J.
\end{equation}
By Proposition \ref{mlr_ccv}, this objective is concave, and can be solved efficiently. In this setup, the Princessp problem adds general convex problem. This extension of Princessp to MLR allows us to handle one-sided tests in the following well-known statistical models: 
\begin{itemize}
\item One-dimensional exponential families with continuous support: Clearly, exponential families $p_{\theta}(x) = \exp(\theta \cdot x - A(\theta))$ have monotone likelihood ratio in $x$. Our framework includes one-sided tests on the natural parameter, such as the mean of a normal family $\mathcal{N}(\mu,\sigma^2)$ ($\sigma^2$ fixed), the variance in a normal family $\mathcal{N}(\mu,\sigma^2)$ ($\mu$ fixed), beta distributions (with one shape parameter fixed), or gamma distributions (with either the shape or the rate parameter fixed). 

\item Non-central $t,F,\chi$ distributions: The distribution of the $t$-statistic with fixed degrees of freedom and non-centrality parameter $\mu/\sigma$ has MLR in $t$ (e.g., \cite{lehmann2005testing}, p. 224). Similarly, the distribution of the non-central $F$ and $\chi^2$-statistics with fixed degrees of freedom have MLR (e.g., \cite{lehmann2005testing}, p. 307).

\item Location families: Location families $f_{\theta}(x) = f(x-\theta)$ with strictly positive and almost surely smooth log-concave density $f$ have monotone likelihood ratio. Examples include double-exponential (Laplace) distributions with $f_{\theta}(x) = \exp(-|x-\theta|)/2$ and logistic distributions $f_{\theta}(x) = \exp(x-\theta)/(1+\exp(x-\theta))^2$. 
\end{itemize}

We emphasize that the assumptions on the family of densities $f_{\theta}(x)$ made in this section are essential, and not merely technical. Indeed, if the distributions $F_{\theta}$ do not have a density, then the ROC curve may be discontinuous, rendering the optimization problem discontinuous, and definitely nonconvex. For instance, if $F_{p}$ denotes the distribution function of a Bernoulli variable with success probability $p$, then the ROC is the step function
$$F_q(F_p^{-1}(x)) = 
\left\{
	\begin{array}{ll}
		1  & \mbox{\, if \, }  x>p, \\
		q & \mbox{\, if \, } x \in (0,p],\\
		0 & \mbox{\, if \, } x \le 0.
	\end{array}
\right.
$$ 
Similar problems occur when the densities can be zero, or when the supports are non-overlapping.
%

An even more general convexity property holds for likelihood-ratio tests. As is well known in statistical folklore, and formalized for instance in Proposition 3.1 of \cite{pena2011power}, the ROC curve of LR tests is often automatically concave.  However, in our applications, the alternative is usually composite, parametrized by a scalar $\theta \in H_1$, not just a simple alternative. 
There is often some uncertainty in specifying $\theta$ from prior data. For this reason it is important to have good uniform optimality properties against composite alternatives. For this one essentially needs the MLR, which is why we chose that setting here.

\subsection{Two-sided tests in exponential families}

For two-sided tests, we can guarantee that Princessp works in exponential families, a setting more restricted than MLR, but more general than Gaussian. Suppose that we have observations $T_i \sim p_{\theta_i}(x)$ with density $p_{\theta}(x) = \exp(\theta \cdot x - A(\theta))$ with respect to some dominating measure. We want to test $\theta_i = \theta_{0i}$ against $\theta_i \neq \theta_{0i}$. Suppose we reject the $i$-th null if $|T_i - \theta_{0i}|>c_i$ for critical values $c_i$ such that the test has level $\alpha$. Let $G(\alpha):=G_{\theta_{1i},\theta_{0i}}(\alpha)$ be the power of this test at an alternative hypothesis $\theta_{1i}$. We analyze the convexity of the ROC curve in this setting. We are also interested in strong convexity, because it is needed for the convergence of the log-barrier method.

\begin{proposition}[Concavity for two-sided tests in exponential families] Consider two-sided tests $\theta = \theta_{0}$ against $\theta \neq \theta_{0}$ for observations following the exponential family $p_{\theta}$. 

 \begin{enumerate}
\item If $|\theta_1|>|\theta_0|$, the ROC curve $G_{\theta_{1},\theta_{0}}(\alpha)$ at $\theta_1$ is a concave function of the level $\alpha$. If $|\theta_1|<|\theta_0|$, $G_{\theta_{1},\theta_{0}}(\alpha)$ is convex. 
\item Moreover, if $|\theta_1|>2|\theta_0|$, then $G$ is strongly concave on any interval $\alpha \in (0,1-\ep)$, $\ep>0$. Similarly, if $2|\theta_1|<|\theta_0|$, $G$ is strongly convex on any interval $\alpha \in (0,1-\ep)$, $\ep>0$.
\end{enumerate}
\label{two_exp}
\end{proposition}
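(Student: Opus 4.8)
The plan is to treat the ROC as a curve traced out parametrically by the critical value $c$, rather than to find $G$ as an explicit function of the level. Writing the two-sided rejection region as $\{T<-c\}\cup\{T>c\}$ (the equal-tailed symmetric region relevant here), both the level $\alpha(c)=[1-F_{\theta_0}(c)]+F_{\theta_0}(-c)$ and the power $G(c)=[1-F_{\theta_1}(c)]+F_{\theta_1}(-c)$ are smooth, strictly decreasing functions of $c$. Hence the ROC is the smooth curve $c\mapsto(\alpha(c),G(c))$, and by the chain rule its slope is $dG/d\alpha=G'(c)/\alpha'(c)=:R(c)$, where $R(c)=[f_{\theta_1}(c)+f_{\theta_1}(-c)]/[f_{\theta_0}(c)+f_{\theta_0}(-c)]$ is the ratio of the alternative to the null density, summed over the two boundary points. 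Because $\alpha$ is strictly decreasing in $c$, concavity of $G$ in $\alpha$ is equivalent to $R(c)$ being increasing in $c$, and convexity to $R(c)$ decreasing. Both parts of the proposition therefore reduce to monotonicity statements about $R$.

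Next I would substitute the exponential-family form $f_\theta(x)=\exp(\theta x-A(\theta))m(x)$. By the symmetry $T\mapsto-T$ I may assume $\theta_0,\theta_1\ge 0$. The constants $\exp(-A(\theta))$ factor out and the base density cancels (exactly, in the symmetric case that covers the Gaussian model at the heart of the paper), leaving $R(c)$ equal to a positive constant times $\cosh(\theta_1 c)/\cosh(\theta_0 c)$. Differentiating, $\frac{d}{dc}\log R(c)=\theta_1\tanh(\theta_1 c)-\theta_0\tanh(\theta_0 c)$. Since $t\mapsto t\tanh(tc)$ is even and strictly increasing on $[0,\infty)$ for fixed $c>0$, this log-derivative is positive precisely when $|\theta_1|>|\theta_0|$ and negative precisely when $|\theta_1|<|\theta_0|$, which proves part 1. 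Equivalently, one can split $G$ into its two tails, which by Proposition \ref{mlr_ccv} are a concave and a convex one-sided ROC curve, and verify that the concave tail dominates exactly when the alternative is the more extreme parameter.

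For part 2 I would establish a uniform bound $-G''(\alpha)\ge\mu_0>0$ on $(0,1-\ep)$. Writing $-G''(\alpha)=R'(c)/|\alpha'(c)|$ and carrying out the differentiation, the numerator $R'(c)$ reduces, via the product-to-sum identities, to a positive constant times $\tfrac{\theta_1-\theta_0}{2}\sinh((\theta_1+\theta_0)c)+\tfrac{\theta_1+\theta_0}{2}\sinh((\theta_1-\theta_0)c)$, which is strictly positive once $\theta_1>\theta_0\ge 0$, while $|\alpha'(c)|$ supplies a denominator factor proportional to $m(c)\cosh^3(\theta_0 c)$. On any compact range of $c$ (that is, away from $\alpha=0$) the whole expression is continuous and strictly positive, hence bounded below. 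The only delicate regime is $c\to\infty$, i.e.\ $\alpha\to 0$, where the expression behaves like $e^{(\theta_1-2\theta_0)c}/m(c)$.

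This last asymptotic is exactly where the factor of two appears and is, I expect, the main obstacle. Requiring $|\theta_1|>2|\theta_0|$ makes the exponent $\theta_1-2\theta_0$ positive, so the growing exponential dominates and forces $-G''\to\infty$ as $\alpha\to0$; the uniform lower bound $\mu_0$ then follows without any delicate estimate of the decay rate of the base measure $m$. In other words, the constant $2$ is precisely what guarantees strong concavity up to the boundary $\alpha=0$ uniformly across exponential families, rather than being an artifact of a particular tail. The strongly convex statement, under $2|\theta_1|<|\theta_0|$, follows by the same computation after the symmetry $T\mapsto-T$, with all curvatures reversed. A secondary point to handle carefully is the cancellation of the base-measure terms in forming $R'(c)$: this is immediate for symmetric densities but needs attention for general exponential families.
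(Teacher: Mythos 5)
Your proof follows essentially the same route as the paper's: parametrize the ROC by the critical value $c$, identify $dG/d\alpha$ with the two-point density ratio, reduce it (after the exponential-family substitution) to $\cosh(\theta_1 c)/\cosh(\theta_0 c)$ so that concavity becomes monotonicity in $c$, and obtain strong concavity from the $e^{(\theta_1-2\theta_0)c}$ asymptotics of $-G''$, which is exactly where the paper also locates the factor of two. The differences are presentational: your $t\mapsto t\tanh(tc)$ log-derivative argument makes explicit what the paper dismisses as a ``simple calculus exercise,'' your product-to-sum identity replaces the paper's term-grouping lower bound on the same numerator, and you are in fact more careful than the paper about the cancellation of the base measure, which the paper silently assumes.
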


\begin{proof}
Let $F_{\theta}$ be the cumulative distribution function of the observations under parameter $\theta$. The critical value $c$ for the test is determined by the equation $1-[F_{\theta_0}(\theta_0+c)-F_{\theta_0}(\theta_0-c)]=\alpha$. Our assumptions guarantee that $c$ is well-defined. Indeed, $a(c)=[F_{\theta_0}(\theta_0+c)-F_{\theta_0}(\theta_0-c)]$ takes value $a(0)=0$, is differentiable with non-negative derivative $a'(c) = f_{\theta_0}(\theta_0+c)+f_{\theta_0}(\theta_0-c)\ge0$, and moreover, has limit $1$ as $c\to\infty$. In addition, $a'(c)>0$ for all $c$ such that either $\theta_0+c$ or $\theta_0-c$ belong to the support of $F_{\theta_0}$. Therefore, if $\theta_0 \in \I$, then $a'(c)>0$ precisely on an interval $[0,c_0)$ for some $c_0$. This implies that for any $\alpha \in (0,1)$, the equation $a(c) = 1-\alpha$ has a unique solution on $[0,c_0)$, showing that $c$ is well-defined. 

Moreover, $dc/d\alpha = -1/[f_{\theta_0}(\theta_0+c)+f_{\theta_0}(\theta_0-c)]$ for $c \in [0,c_0)$. In addition, the power of the test equals $p = 1-[F_{\theta_1}(\theta_0+c)-F_{\theta_1}(\theta_0-c)]$, therefore $dp/d\alpha = - [f_{\theta_1}(\theta_0+c)+f_{\theta_1}(\theta_0-c)]\cdot dc/d\alpha$. We will only consider $|\theta_1|>|\theta_0|$, the other case is analogous.

\begin{enumerate}
\item 
To show $p$ is concave for $|\theta_1|>|\theta_0|$, we must prove that $dp/d\alpha$ is decreasing in $\alpha$. Since $\alpha$ can be viewed as a decreasing function of $c \in [0,c_0)$, it is enough to show that $g:=dp/d\alpha$ is increasing in $c \in [0,c_0)$, where 
$$g(c) = \frac{f_{\theta_1}(\theta_0+c)+f_{\theta_1}(\theta_0-c)}
{f_{\theta_0}(\theta_0+c)+f_{\theta_0}(\theta_0-c)}.$$
Since we are in an exponential family, we see 
$$g(c) = \exp[A(\theta_0)-A(\theta_1)-\theta_0(\theta_0-\theta_1)]\cdot\frac{\exp(\theta_1c)+\exp(-\theta_1c)}
{\exp(\theta_0c)+\exp(-\theta_0c)}.$$
It is a simple calculus exercise to show that this function is increasing in $c\in[0,c_0)$--- and in fact on $[0,\infty)$---if $|\theta_1|>|\theta_0|$. 

\item To show strong concavity, we must check in addition that $d^2p/d\alpha^2<c_0<0$ for a constant $c_0$. Note that $d^2p/d\alpha^2 = dg/d\alpha = dg/dc\cdot dc/d\alpha$. Denoting by $K>0$ constants that do not depend on $c$ (or $\alpha$) and whose meaning may change from line to line, we get 
$$\frac{dg}{d\alpha} = -K \frac{\theta_1 \sinh(\theta_1c)\cosh(\theta_0c) - \theta_0 \sinh(\theta_0c)\cosh(\theta_1c)}{\cosh(\theta_0c)^3}. $$
Without loss of generality we may assume that $\theta_1>\theta_0\ge0$. Then the numerator equals 
$(\theta_1-\theta_0)\sinh(\theta_1c)\cosh(\theta_0c) 
+ \theta_0 [\sinh(\theta_1c)\cosh(\theta_0c)  - \sinh(\theta_0c)\cosh(\theta_1c)]
\ge (\theta_1-\theta_0)\sinh(\theta_1c)\cosh(\theta_0c)$. 
Hence, 
$$\frac{dg}{d\alpha} \le -K \frac{\sinh(\theta_1c)}{\cosh(\theta_0c)^2}. $$
Since $\alpha<1-\ep$, we have $c>c_{\ep}>0$. Clearly, for small $c>c_{\ep}>0$, $\sinh(\theta_1c),\cosh(\theta_0c)$ are bounded away from $0$ and $\infty$, so that $\frac{dg}{d\alpha}<c_0<0$ there. For large $c$, $\sinh(c) \sim \cosh(c) \sim \exp(c)$. Therefore, $\sinh(\theta_1c)/\cosh(\theta_0c)^2 \sim \exp([\theta_1-2\theta_2]c)$. So, if $\theta_1 > 2\theta_2$, then $dg/d\alpha <c_0 <0$ for all $c$, showing that the power function is strongly concave.
\end{enumerate}
\end{proof}

An example of special interest is testing normal observations $T_i\sim\mathcal{N}(\mu_i,1)$  for $\mu_i=0$ against $\mu_i \neq 0$. Then the two-sided analogue of Spj\o tvoll weights amounts to the optimization problem
\begin{equation}
\label{two_sided_Spjotvoll_weights}
 \max_{w\in [0,1/(2q)]^J} \mbox{  } \sum_{i=1}^{J} \Phi\left(\Phi^{-1}\left(qw_i/2\right)-\mu_i\right) + \Phi\left(\Phi^{-1}\left(qw_i/2\right)+\mu_i\right) \,\,\, \mbox{   s.t. }  \sum_{i=1}^{J} w_i = J.
\end{equation}

Note that the first term is concave in $w_i$, but the second term is not. However, their sum is concave by Proposition \ref{two_exp}.  This reinforces that the results in this section go beyond those from the previous section.

We can also find a nearly explicit form for the two-sided optimal Gaussian weights. This leads to a fast algorithm for computing them, as well as insights about their behavior (see Sec. \ref{mon}).

\begin{proposition}[Explicit optimal weights for two-sided Gaussian tests] Define the function $w(\mu;\lambda) = 2\Phi\left(  - \arccosh\left[\lambda \exp(\mu^2/2)/q \right]/|\mu| \right)/q$ and the constant $m = \min_i \{\mu_i^2/2\}$. Define also the function $H(\lambda) =\sum_{i} w(\mu_i;\lambda)$. If all $\mu_i \neq 0$, and if $H(q\exp(-m)) \ge J$, then the optimal weights for the two-sided normal problem are $w_i = w(\mu_i;\lambda)$, where $\lambda\ge q\exp(-m)$ is the unique constant such that $\sum_i w_i(\lambda) = J$. 
\label{two_side_formula}
\end{proposition}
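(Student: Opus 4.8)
The plan is to treat the maximization in \eqref{two_sided_Spjotvoll_weights} as a concave program and read off the optimal weights from its Karush--Kuhn--Tucker (KKT) conditions. By Proposition \ref{two_exp}, applied with $\theta_0=0$ and $\theta_1=\mu_i$ (so $|\theta_1|>|\theta_0|$ since $\mu_i\neq 0$), the power $\Phi(\Phi^{-1}(qw_i/2)-\mu_i)+\Phi(\Phi^{-1}(qw_i/2)+\mu_i)$ is concave in the level $qw_i$ and hence in $w_i$; summing, the whole objective is concave, and the feasible set is convex. Therefore the KKT conditions are both necessary and sufficient for a global optimum, and it suffices to exhibit a feasible point together with multipliers satisfying them. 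I would attach a single multiplier $\lambda$ to the equality constraint $\sum_i w_i=J$, argue the box constraints are inactive at the optimum, and thereby reduce stationarity to the coordinatewise identity $\partial_{w_i}(\text{objective})=\lambda$ for all $i$.

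The core computation is to simplify this condition. Writing $c_i:=\Phi^{-1}(qw_i/2)$ and using $\tfrac{d}{dw_i}\Phi^{-1}(qw_i/2)=(q/2)/\phi(c_i)$ with $\phi$ the standard normal density, the derivative of the $i$-th summand is $\tfrac{q}{2}\,[\phi(c_i-\mu_i)+\phi(c_i+\mu_i)]/\phi(c_i)$. The key algebraic step is the Gaussian factorization $\phi(c-\mu)+\phi(c+\mu)=2e^{-\mu^2/2}\cosh(c\mu)\,\phi(c)$, which collapses the stationarity equation to the clean form $q\,e^{-\mu_i^2/2}\cosh(c_i\mu_i)=\lambda$, i.e. $\cosh(c_i\mu_i)=\lambda e^{\mu_i^2/2}/q$. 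Since the box forces $qw_i/2<1/2$, we have $c_i<0$, so the correct branch of the inverse is $c_i=-\arccosh[\lambda e^{\mu_i^2/2}/q]/|\mu_i|$. Inverting $c_i=\Phi^{-1}(qw_i/2)$ then yields exactly $w_i=\tfrac{2}{q}\Phi(c_i)=w(\mu_i;\lambda)$, the claimed formula.

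It remains to pin down $\lambda$, and here the role of each hypothesis becomes transparent. The quantity $\arccosh[\lambda e^{\mu_i^2/2}/q]$ is defined precisely when its argument is at least $1$, i.e. when $\lambda\ge q e^{-\mu_i^2/2}$; requiring this simultaneously for all $i$ is exactly $\lambda\ge q e^{-m}$ with $m=\min_i\{\mu_i^2/2\}$. For existence and uniqueness of the normalizing multiplier I would show that $H(\lambda)=\sum_i w(\mu_i;\lambda)$ is continuous and strictly decreasing on $[q e^{-m},\infty)$ (each $w(\mu_i;\cdot)$ decreases because $\Phi$ and $\arccosh$ are increasing), with $H(\lambda)\to 0$ as $\lambda\to\infty$. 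Consequently $H(\lambda)=J$ admits a solution $\lambda\ge q e^{-m}$ if and only if $H(q e^{-m})\ge J$, and when it does the solution is unique, giving precisely the characterization of $\lambda$ in the statement.

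The step I expect to be the main obstacle is justifying that the box constraints are inactive, which is what legitimizes the single-multiplier stationarity used above. Positivity is automatic, since $w(\mu_i;\lambda)=\tfrac{2}{q}\Phi(\cdot)>0$. The delicate point is the interplay between the upper bound and the domain requirement $\lambda\ge q e^{-m}$: the hypothesis $H(q e^{-m})\ge J$ is exactly the condition that forces the normalizing multiplier into the range where $\arccosh$ is defined, so that the candidate weights exist; were it to fail, the coordinate with smallest $|\mu_i|$ would be driven toward its boundary and the clean closed form would break. I would therefore, under $H(q e^{-m})\ge J$, verify that the resulting $w(\mu_i;\lambda)$ respect the upper bound and that the corresponding box multipliers vanish, thereby completing the KKT certificate and establishing optimality.
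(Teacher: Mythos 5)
Your proposal is correct and follows essentially the same route as the paper's proof: Lagrangian stationarity collapses (via the Gaussian identity $\phi(c-\mu)+\phi(c+\mu)=2e^{-\mu^2/2}\cosh(c\mu)\phi(c)$) to $\lambda=q e^{-\mu_i^2/2}\cosh(\mu_i z_i)$, the negative branch of $\arccosh$ gives the closed form, and strict monotonicity of $H$ on $[q e^{-m},\infty)$ with limit $0$ yields existence and uniqueness of $\lambda$ under $H(q e^{-m})\ge J$. Your added attention to the inactivity of the box constraints is a reasonable tightening of a step the paper treats briefly, but it does not change the argument.
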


\begin{proof}
Define $g(w,\mu) = \Phi\left(\Phi^{-1}\left(qw/2\right)-\mu\right) + \Phi\left(\Phi^{-1}\left(qw/2\right)+\mu\right) $. The Lagrangian for the two-sided normal problem is 
$$L(w_1,\ldots,w_J; \lambda) = \sum_{i=1}^J [ g(w_i,\mu_i) - \lambda w_i].$$
Now,  $\partial [g(w,\mu) - \lambda w]/\partial w = q\exp(-\mu^2/2) \cosh(\mu z)-\lambda$, where $z = \Phi^{-1}(qw/2)$. Therefore, we have $ \partial  L(w_1,\ldots,w_J; \lambda) /\partial w_i = 0$ for all $i$ iff, denoting  $z_i = \Phi^{-1}(qw_i/2)$
$$\lambda = q\exp(-\mu_i^2/2) \cosh(\mu_i z_i),$$
or equivalently 
$\lambda \exp(\mu_i^2/2)/q = \cosh (\mu_i z_i)$. This is equivalent to $\arccosh\left[\lambda \exp(\mu_i^2/2)/q\right] = |\mu_i| |z_i|$. Since $z_i \le 0$, from this we obtain that the optimal $w_i$ maximizing the Lagrangian $L$ have the desired form $w_i = 2\Phi\left(- \arccosh\left[\lambda \exp(\mu_i^2/2)/q \right]/|\mu_i| \right)/q$ for fixed $\lambda$, if $\lambda \ge q \exp(-\mu_i^2/2)$. If we can show that there exists a $\lambda$ such that the constraint $\sum w_i = J$ holds, then it will follow that those weights also solve the original constrained problem. 

For this, note that the function $\lambda \to \arccosh\left[\lambda \exp(\mu_i^2/2)/q \right]$ is continuous and strictly increasing on $[q \exp(-\mu_i^2/2),\infty)$, going from $0$ to $\infty$. Therefore, the weight $w_i(\lambda)$ is a strictly decreasing function of $\lambda$ on $[q \exp(-\mu_i^2/2),\infty)$, going from $2\Phi(0)/q = 1/q$ to $0$. Hence $H(\lambda) = \sum_i w_i(\lambda)$ is strictly decreasing on $[q\exp(-m),\infty)$, with limit 0 at $+\infty$. By the assumption $H(q\exp(-m))\ge J$, it follows that there is a unique $\lambda \ge q\exp(-m)$ such that the constraint holds. This finishes the proof. 
\end{proof}

\subsection{Examples of convex constraints}
\label{examples}

The key flexibility of Princessp lies in the wide variety of user-specified convex constraints that it can be used with. To help potential users see how this is relevant in a variety of contexts, we now give several examples of such constraints. 

\begin{enumerate}

\item{\bf Monotone weights.} Suppose in the Gaussian case that the means are sorted such that $0 < |\mu_1| \le |\mu_2| \le  \ldots \le  |\mu_J|$. Monotone weights require that larger absolute effects $|\mu_i|$ have a larger weight, so that $w_1 \le w_2 \le  \ldots \le  w_J $. 

Unconstrained Spj\o tvoll weights are not monotone in general. However, monotonicity is intuitively desirable, as \emph{larger effects are worth more}. Many popular weighting schemes are monotone, including exponential weights $w_i  \propto \exp(\beta|\mu_i|)$ and cumulative weights $w_i  \propto \Phi(|\mu_i|-B)$, for $B>0$, both proposed in \cite{roeder2006using}. The weights of \cite{li2013using} based on normalized versions of $(-2\log \tilde P_j)^{1/2}$, where $\tilde P_j$ are $p$-values from independent expression Quantitative Trait Locus (eQTL) studies, are also monotone in the strength of the prior information. Princessp offers a principled way to incorporate monotonicity as a constraint in any weighting scheme.

\item{\bf Bounded weights.} Boundedness requires that $l \le w_i \le u$ for two constants $0 \le l < 1 < u$. This ensures that the current $p$-values $P_i$ get multiplied by at most $1/l$. Hence, if $l=1/2$ and the original $p$-value cutoff is---say---the conventional threshold $q = 5\cdot 10^{-8}$ for genome-wide significance, then each hypothesis with $p$-value $P_i \le ql = 2.5 \cdot 10^{-8}$ is significant, \emph{regardless of the strength of prior information}. This is desirable as the prior information is often unreliable. Small weights are risky, as they can weaken strong $p$-values. For instance if $P_i = 10^{-10}$ but $w_i = 10^{-9}$, then the weighted $p$-value is a meager $Q_i = 0.1$. 

Choosing the lower bound presents a bias-variance tradeoff. A large lower bound, say 0.5, has ``low variance'' as the weighted $p$-values do not change much. However it potentially has ``high bias'' as the truly optimal weights may be small. Empirically, we found that a lower bound in the range 0.01-0.5 often works well (see the data analysis section).

Using an upper bound $u$, no hypothesis with $p$-value $P_i > qu$ is rejected. For instance, if $q = 5\cdot 10^{-8}$ and $u=10$, then only the $p$-values $P_i \le 5\cdot 10^{-7}$ have a chance of being rejected. The upper bound ensures that we do not place too much weight on any hypothesis. Princessp allows the user to conveniently specify arbitrary bounds for the weights. 

\item{\bf Stratified weights.} Let $S_1, \ldots, S_k$ be disjoint subsets of $\{1,\ldots,J\}$.  With stratified weights, we want to assign an equal weight to the hypotheses in the same subset $S_i$, for each $i$. This is the linear equality constraint that $w_j = w_k$, for all $j,k \in S_i$, and for all $i$. 

Stratification is natural when effects are grouped, say into different functional classes of genetic variants in genome-wide association studies. The stratified FDR method \citep{sun2006stratified}, is closely related.

Related ideas were also used more informally in previous work. In a study of nicotine dependence with approximately 3700 genetic variants, \cite{saccone2007cholinergic} improve power by giving nicotine receptor genes ten times the weight of other candidate genes. More generally, \cite{roeder2009genome} study binary weights that take only two possible values.
Similarly, \cite{sveinbjornsson2016weighting} assign weights to classes of variants in GWAS based on functional annotation. As another example, \cite{eskin2008increasing} groups polymorphisms by location in the genome, assigning them to a tag and requiring the same weight for each tag. Stratified weights can be incorporated in a direct was as convex constraints in Princessp. 

\item{\bf Smooth weights.} If the hypotheses have some spatial structure, then it is reasonable to require that the weights be ``smooth'' with respect to this structure. For instance, genetic variants are aligned on chromosomes, and one could require the weights to be smoothly varying as a function the position.  This can be achieved in many ways, borrowing from the vast literature on regularization in statistics. A simple example is to add a total variation constraint $\sum_i |w_i-w_{i+1}| \le \ep J $, for $\ep = 0.01$ say.

A related idea appears in \cite{rubin2006method}, who show in simulations that power improves substantially by applying a smoothing spline to the weights, when the true effects $i \to \mu_i$ are smooth as a function of $i$. \cite{roeder2009genome} also find that smoothing the weights improves power when weights are estimated by sample splitting. 
\cite{ignatiadis2015data-driven} observe that adding a total-variation penalty or a penalty of the form $\sum_i|w_i-1| \le c$ can reduce variance. Princessp allows users to specify such constraints in a general principled framework.

\end{enumerate}

In specific applications, there can be many other constraints that incorporate problem-specific information and requirements. 

\subsection{Related work}
\label{rel_work}

There are many methods for multiple testing with prior information, partially reviewed by \cite{roeder2009genome} and \cite{gui2012weighted}, some mentioned earlier. Candidate studies---testing the top candidates based on prior information---are as old as statistics itself. They can be viewed as $p$-value weighting methods with binary weights. 

More general methods for multiple testing with prior information have been developed since at least the 1970's. Most work focuses on single-step Bonferroni procedures maximizing the average power and controlling the family-wise error rate. In seminal work, \cite{Spjotvoll1972optimality} described theoretically such procedures under very general conditions. 

Later work extended Spj\o tvoll's results in several ways. \cite{benjamini1997multiple} allowed weights in the importance of the hypotheses. \cite{roeder2009genome} and \cite{rubin2006method} found explicit optimal weights in the Gaussian model $\mathcal{N}(\mu_i,1)$. \cite{eskin2008increasing} and \cite{darnell2012incorporating} extended their framework to genome-wide association studies, accounting for correlations. 

\cite{westfall1998using} considered the model $\mathcal{N}(\mu_i,1)$ with proper priors on $\mu_i$. They gave small-scale algorithms to find optimal weights for the Bonferroni method via nonconvex optimization.  \cite{dobriban2015optimal} showed how to find the weights efficiently under a Gaussian prior. 

Less is known beyond the Bonferroni method.  \cite{ holm1979simple}'s step-down method can use weights, and \cite{westfall2001optimally} and \cite{westfall2004weighted} considered finding optimal weights for small $J$. \cite{genovese2006false} showed that the weighted Benjamini--Hochberg procedure controls the FDR and \cite{roquain2009optimal} showed how to choose weights optimally in a special asymptotic regime.  \cite{pena2011power} developed a general decision-theoretic framework for weighted family-wise error rate and FDR control.

\cite{bretz2009graphical} and follow-up work developed a graphical approach to stepwise multiple testing. To use this in large-scale applications, one needs to design a graph specifying how the levels are distributed upon rejection of each individual hypothesis.

Another line of work considers data re-use, by constructing weights using the same dataset where the tests are performed. \cite{rubin2006method} proposed a sample-splitting approach combined with smoothing. In a slightly different setup, \cite{storey2007optimal} developed the Optimal Discovery Procedure, maximizing the expected number of true discoveries, subject to a constraint on the expected number of false discoveries. 
\cite{sun2007oracle} developed a method maximizing the marginal False Nondiscovery Rate (mFNR) subject to controlling the marginal False Discovery Rate (mFDR), estimated the oracle procedure consistently in a hierarchical model. 

 \cite{bourgon2010independent} proposed an independent filtering approach where test statistics are independent of the prior information only under the null, not under the alternative hypothesis. Recently, \cite{ignatiadis2015data-driven} proposed the more general Independent Hypothesis Weighting framework. This promising approach focuses on the false discovery rate, relies on convex relaxations for efficient computations, and splits of the tests to ensure type I error control.

\section{Monotone $p$-value weights}
\label{mon}

The key practitioner constraints motivating Princessp were the need for monotonicity and boundedness of one-sided Gaussian weights. We now study that important example in detail. We develop algorithms to compute  bounded monotone weights for one-sided Gaussian tests, and explore their performance in simulations and GWAS data analysis. This is a helpful example because unconstrained Gaussian Spj\o tvoll weights are well understood and provide a solid background for comparison.

Assuming the prior means are sorted such that $0 \ge \mu_1 \ge \mu_2 \ge  \ldots \ge  \mu_J$, the bounded monotone weights problem with lower bound $l$ and upper bound $u$ is: 
\begin{align}
\label{monotone_weights}
  \max_{w\in [0,1/q]^J} \mbox{  } & \sum_{i=1}^{J} \Phi\left(\Phi^{-1}\left(qw_i\right)-\mu_i\right) \,\,\, \mbox{   s.t. }  \sum_{i=1}^{J} w_i = J.  \\
  \mbox{   s.t. }  &\,\, l \le w_1 \le w_2 \le  \ldots \le  w_J \le u. 
\end{align}

One justification for monotonicity is that larger effects are worth more. Another justification is that for a sufficiently small significance level $q$, the optimal Spj\o tvoll weights are in fact monotone. 

\begin{proposition}[Monotonicity of one-sided Spj\o tvoll weights for small $q$]
\label{mon_spjot}
Let us denote $M = \max_i \{\mu_i^2/2\}$ and $G(c) = \sum_{i=1}^{J}$ $\Phi(\mu_i/2 + c/\mu_i)$. Suppose that the significance level $q$ is small enough that $q \le G(M)/J$. Then the unconstrained Spj\o tvoll weights $w_i=w(\mu_i)$ defined by Eq. \eqref{Spjotvoll_weights} are monotone increasing in $|\mu_i|$.
\end{proposition}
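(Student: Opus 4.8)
The plan is to reduce monotonicity of the whole weight vector to a single scalar inequality $c \ge M$, and then translate that inequality into the stated threshold on $q$ via the monotonicity of $G$.

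First I would analyze how a single weight $w(\mu) = \Phi(\mu/2 + c/\mu)/q$ depends on the magnitude $|\mu|$, holding $c>0$ fixed. Substituting $t = |\mu| = -\mu > 0$ rewrites the argument of $\Phi$ as $-(t/2 + c/t)$, so $w = \Phi(-(t/2 + c/t))/q$. Differentiating, $dw/dt$ has the sign of $c/t^2 - 1/2$, which is positive exactly when $t < \sqrt{2c}$. Hence $w$, as a function of $|\mu|$, strictly increases on $(0,\sqrt{2c})$ and strictly decreases on $(\sqrt{2c},\infty)$: it is unimodal with peak at $|\mu| = \sqrt{2c}$. This is the one genuinely computational step, but it is just a one-line derivative; its conceptual content is recognizing the unimodal shape, which simultaneously explains why unconstrained Spj\o tvoll weights fail to be monotone in general.

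Since the $|\mu_i|$ are sorted with $|\mu_1| \le \ldots \le |\mu_J|$, the weights $w_1 \le \ldots \le w_J$ are monotone increasing in $|\mu_i|$ as soon as all the $|\mu_i|$ lie in the increasing branch $(0,\sqrt{2c})$, for which it suffices that the largest magnitude satisfy $|\mu_J| \le \sqrt{2c}$, equivalently $M = \mu_J^2/2 \le c$. Thus the vector-level question collapses to the scalar inequality $c \ge M$. I would then pin down $c$ through the constraint: because $w_i = \Phi(\mu_i/2 + c/\mu_i)/q$ and $\sum_i w_i = J$, the defining $c$ is the unique solution of $G(c) = Jq$. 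Differentiating and using $\mu_i<0$ gives $G'(c) = \sum_i \phi(\mu_i/2 + c/\mu_i)/\mu_i < 0$, so $G$ is strictly decreasing; this both guarantees uniqueness of $c$ and lets me invert the comparison. By strict monotonicity, $c \ge M$ is equivalent to $G(c) \le G(M)$, and substituting $G(c) = Jq$ turns this into $Jq \le G(M)$, i.e.\ $q \le G(M)/J$ --- exactly the hypothesis. Combining with the preceding paragraph then yields monotonicity of the weights.

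The main obstacle is not any single calculation but keeping the equivalences pointed in the right direction: $G$ decreasing means that smaller $q$ forces larger $c$, which pushes the peak $\sqrt{2c}$ to the right until it clears $|\mu_J|$, and the threshold $q \le G(M)/J$ is precisely the point at which $\sqrt{2c}$ reaches $|\mu_J|$. I would also note in passing that feasibility is automatic, since $0 \le \Phi < 1$ forces each $w_i \in (0, 1/q)$, so no boundary cases intrude.
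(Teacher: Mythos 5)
Your proposal is correct and follows essentially the same route as the paper's proof: both use that $G$ is strictly decreasing to convert the hypothesis $q \le G(M)/J$ into $c \ge M$, and then observe that the map $\mu \mapsto \mu/2 + c/\mu$ (equivalently, your unimodality analysis in $t=|\mu|$) makes the weights monotone on the relevant range. The only cosmetic difference is that the paper differentiates $l(\mu)=\mu/2+c/\mu$ directly in the negative variable $\mu$ rather than substituting $t=-\mu$.
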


\begin{proof}
As shown  by \cite{rubin2006method} and \cite{roeder2009genome}, the Spj\o tvoll weights are $w(\mu_i) = \Phi(\mu_i/2 + c/\mu_i)/q$, for the unique $c$ such that $G(c) = Jq$. Since all means $\mu_i <0$ are negative, $G(c)$ is decreasing. By the assumption on $q$, $Jq = G(c) \le G(M)$, hence $c \ge M$. 

Now,  the map $\mu \to l(\mu) = \mu/2 + c/\mu$ has derivative $l'(\mu) = 1/2 - c/\mu^2$, which is negative if $c\ge \mu^2/2$. By assumption on $M$, and by the above conclusion, $c \ge M \ge \mu_i^2/2$ for all $i$. Hence the map $l$ is decreasing in a range including all $\mu_i$. Therefore the weights $w_i = \Phi(l(\mu_i))/q$ are monotonically decreasing as a function of $\mu_i$, as desired. 
\end{proof}

A similar but somewhat more involved statement also holds for two-sided Spj\o tvoll weights. 
\begin{proposition}[Monotonicity conditions for two-sided Spj\o tvoll weights]
Under the assumptions of Proposition \ref{two_side_formula}, and with the notations used there, let $M = \max_i \{\mu_i^2/2\}$. Then the equation
$\log(a) - m = [2a/(a^2-1)^{1/2}-1]M$ has a unique solution $a^*$ on $(1,\infty)$. If $H[qa^*\exp(-m)] \ge J$, then the two-sided optimal weights $w_i = w(\mu_i;\lambda)$ defined in \eqref{two_sided_Spjotvoll_weights} are monotone increasing in $|\mu_i|$.
\end{proposition}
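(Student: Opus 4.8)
The plan is to reduce the monotonicity of the weights to a one-variable differential inequality, and then to bound the two transcendental pieces of that inequality separately, so that the binding case collapses onto the single closed-form equation defining $a^*$.

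First I would use the explicit form from Proposition~\ref{two_side_formula}. Writing $\beta=\lambda/q$, $t=|\mu_i|$ and $s(t)=\arccosh(\beta e^{t^2/2})$, the optimal weight is $w(\mu_i;\lambda)=\tfrac{2}{q}\Phi(-|z|)$ with $|z|=s(t)/t$. Since $\Phi$ is increasing, $w$ is increasing in $|\mu_i|$ if and only if $|z|=s/t$ is decreasing in $t$, so it suffices to show $|z|$ is nonincreasing on the whole range $t\in[\sqrt{2m},\sqrt{2M}]$ containing all $|\mu_i|$. Differentiating $\cosh s=\beta e^{t^2/2}$ gives $\sinh s\cdot s'=t\cosh s$, i.e.\ $s'(t)=t\coth s$, so the requirement $\tfrac{d}{dt}(s/t)\le 0$ becomes
\[
t^2\coth s \le s .
\]

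Second---the crux---I would establish this inequality by two monotone bounds chosen so that the leftover condition is explicit. Put $v=t^2/2\in[m,M]$, so $s=s(v)=\arccosh(\beta e^{v})$ is increasing in $v$. Hence $\coth s\le \coth s_m$ where $s_m:=\arccosh(\beta e^{m})$, giving $t^2\coth s=2v\coth s\le 2v\coth s_m$; and from $\arccosh(x)\ge\log x$ we get $s\ge \log\beta+v$. Therefore $t^2\coth s\le s$ is implied by $2v\coth s_m\le \log\beta+v$, i.e.\ $v(2\coth s_m-1)\le \log\beta$. Because $\coth s_m>1$, the coefficient $2\coth s_m-1$ is positive, so the worst case over $v\in[m,M]$ is $v=M$, and the whole requirement reduces to $M(2\coth s_m-1)\le \log\beta$. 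Setting $a=\cosh s_m=\beta e^{m}=\lambda e^{m}/q$ (so that $\coth s_m=a/\sqrt{a^2-1}$ and $\log\beta=\log a-m$) turns this into $\log a-m\ge[\,2a/\sqrt{a^2-1}-1\,]M$, whose boundary is exactly the equation defining $a^*$. I expect this choice of bounds---recognizing that freezing $\coth s$ at $v=m$ and replacing $\arccosh$ by $\log$ yields precisely the stated closed form---to be the main obstacle.

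Finally I would verify the bookkeeping. The function $D(a)=\log a-m-[\,2a/\sqrt{a^2-1}-1\,]M$ is strictly increasing (its first term increases while $a/\sqrt{a^2-1}$ decreases), tends to $-\infty$ as $a\downarrow 1$ and to $+\infty$ as $a\to\infty$, so it has a unique root $a^*\in(1,\infty)$ and $D(a)\ge 0\iff a\ge a^*$; thus the sufficient condition above reads $\beta e^{m}\ge a^*$, i.e.\ $\lambda\ge qa^*e^{-m}$. Since by Proposition~\ref{two_side_formula} the map $H$ is strictly decreasing and $\lambda$ is defined by $H(\lambda)=J$, the hypothesis $H[qa^*e^{-m}]\ge J=H(\lambda)$ is equivalent to $\lambda\ge qa^*e^{-m}$, exactly what the argument needs; and all side conditions ($\beta e^{m}\ge a^*>1$, so $s_m$ and $\coth s_m$ are well defined) then hold automatically. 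This completes the plan.
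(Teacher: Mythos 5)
Your proof is correct and is essentially the paper's argument in a different parametrization: your bound $\coth s\le\coth s_m$ is exactly the paper's bound $x/(x^2-1)^{1/2}\le a/(a^2-1)^{1/2}$ for $x\ge a$ (with $x=\cosh s$), and your $\arccosh(x)\ge\log x$ is the paper's $\cosh(y)\le e^{y}$ step. The only cosmetic difference is that you fix the threshold $a=\beta e^{m}$ at the outset, whereas the paper leaves $a$ free and then balances the two resulting constraints at $a^*$; both routes land on the identical sufficient condition $\lambda\ge qa^*e^{-m}$ and the same translation via the monotonicity of $H$.
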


\begin{proof}
As shown in Proposition \ref{two_side_formula}, under our conditions, the weights are $w_i = w(\mu_i;\lambda)$, where  $w(\mu;\lambda) = 2\Phi\left(  - \arccosh\left[\lambda \exp(\mu^2/2)/q \right]/|\mu| \right)/q$, and  $\lambda\ge q\exp(-m)$ is the unique constant such that $\sum_i w_i(\lambda) = J$. Clearly, the weights are an even function of $\mu$. 

To show that the weights are monotone increasing in $\mu$, it is enough to ensure that the map $\mu \to l(\mu) = \arccosh\left[\lambda \exp(\mu^2/2)/q \right]/\mu$ is decreasing for $\mu \in (0,\infty)$. For this, let us re-parametrize by $x = \lambda \exp(\mu^2/2)/q$, and denote $c = q/\lambda$, so that $\mu = [2\log(cx)]^{1/2}$. It is enough to show that $l$ is decreasing as a function of $x$. Now $l(x) = \arccosh(x)/[2\log(cx)]^{1/2}$, so that 
$$l'(x) = \frac{(x^2-1)^{-1/2}[2\log(cx)]^{1/2}-\arccosh(x)\cdot 1/2 \cdot [2\log(cx)]^{-1/2} \cdot 2/x}{2\log(cx)}.$$
To ensure $l'(x) \le 0$, it suffices to argue that 
$$\frac{x}{(x^2-1)^{1/2}} [2\log(cx)]-\arccosh(x)\le 0,$$
or equivalently that $\cosh\left[2 x\cdot \log(cx)/(x^2-1)^{1/2}\right]\le x.$
Now, $\cosh(x) = [\exp(x)+\exp(-x)]/2 \le \exp(x)$ for $x\ge 0$, so, after taking logarithms, it is enough to show that  $2 x\cdot \log(cx)/(x^2-1)^{1/2} \le \log(x)$.
Suppose now that $x \ge a$ for some constant $a >1$. Since $x \to x/(x^2-1)^{1/2}$ is decreasing for $x>1$, it suffices to show that $b \log(cx) \le \log(x)$, where $b = 2a/(a^2-1)^{1/2}>1$. This is equivalent to $(cx)^b \le x$, or $x^{b-1} \le c^{-b}$, or also $x \le c^{-b/(b-1)}$. Recalling the definition of $x$, this inequality and $x\ge a$ are equivalent to 
$$a \le \frac{\lambda}{q} \exp(\mu^2/2) \le \left(\frac{\lambda}{q}\right)^{b/(b-1)}$$
or also $\log(a) - \log(\lambda/q) \le \mu^2/2 \le \log(\lambda/q)/(b-1)$. The same inequality can be written as
$$\log(\lambda/q) \ge \max( \log(a) - \mu^2/2, [2a/(a^2-1)^{1/2}-1]\mu^2/2).$$ 
Since this has to be true with $\mu = \mu_i$ for all $i$, it is enough to ensure
$$\log(\lambda/q) \ge \max( \log(a) - m, [2a/(a^2-1)^{1/2}-1]M).$$
Since $f_1(a) = \log(a) - m$ is a strictly increasing function of $a$ going from $-m$ to $\infty$ on $(1,\infty)$, while $f_2(a) = [2a/(a^2-1)^{1/2}-1]M$ is a strictly decreasing function of $a$ going from $+\infty$ to $M$ on $(1,\infty)$, there is a uniqe $a^*$ where $f_1(a)=f_2(a)$. Choosing this $a^*$, we obtain that the requirement is $\lambda \ge q a^* \exp(-m)$. 

Since $\lambda$ is defined by $H(\lambda) = J$, and $H$ is strictly decreasing, this is equivalent to $H[q a^* \exp(-m)] \ge J$. This finishes the proof. 
\end{proof}

While the above propositions hold under a mild constraint, the optimal weights are not bounded away from zero or infinity in general. This must be imposed as a separate constraint. 

We illustrate monotone weights, and their dependence on the bounds. We take $J = 10^3$ hypotheses, set the $p$-value cutoff $q = 10^{-7}$, and draw random effect sizes $\mu = -(|Z_1|,\ldots, |Z_J|)$, where $Z_i$ are iid standard normal.  First we vary $l$ from 0 to 0.75 in steps of size 0.25, keeping $u=\infty$. Next we vary $u$ from 1.25 to 2 in steps of size 0.25, keeping the $l=0$ (Fig. \ref{l_u_weights}). 

\begin{figure}
\centering
\begin{subfigure}{.33\textwidth}
  \centering
  \includegraphics[scale=0.25]{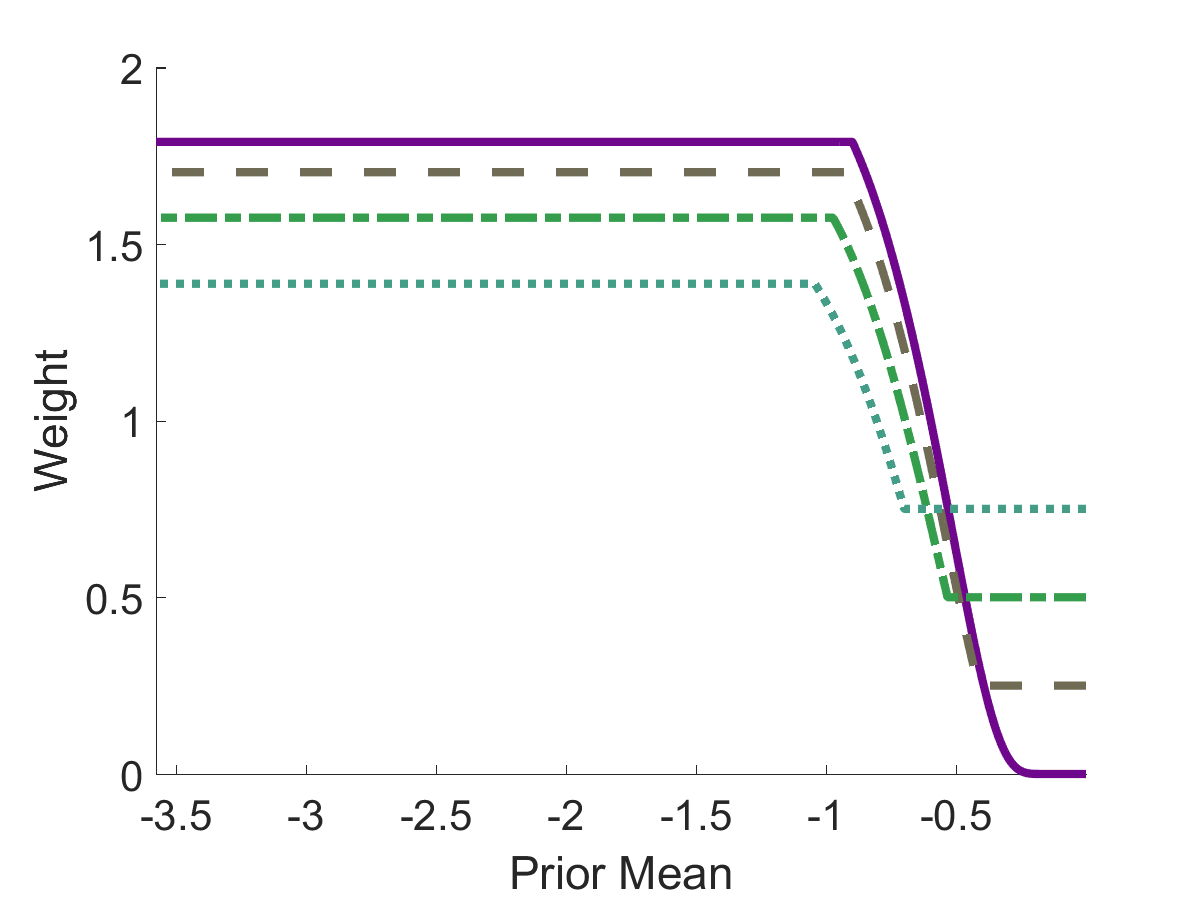}
  \caption{Lower bounded weights.}
\end{subfigure}%
\begin{subfigure}{.33\textwidth}
  \centering
  \includegraphics[scale=0.25]{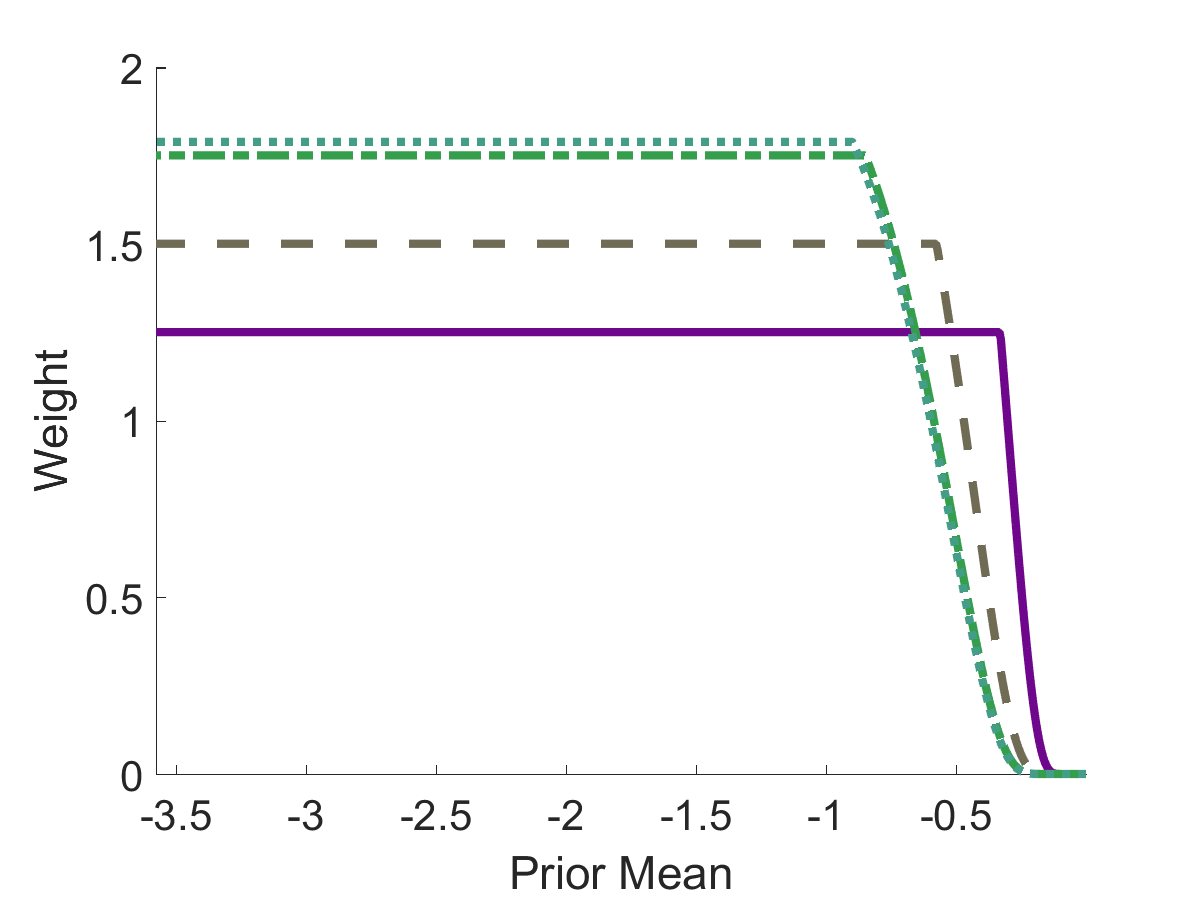}
  \caption{Upper bounded weights.}
\end{subfigure}
\begin{subfigure}{.33\textwidth}
  \centering
  \includegraphics[scale=0.25]{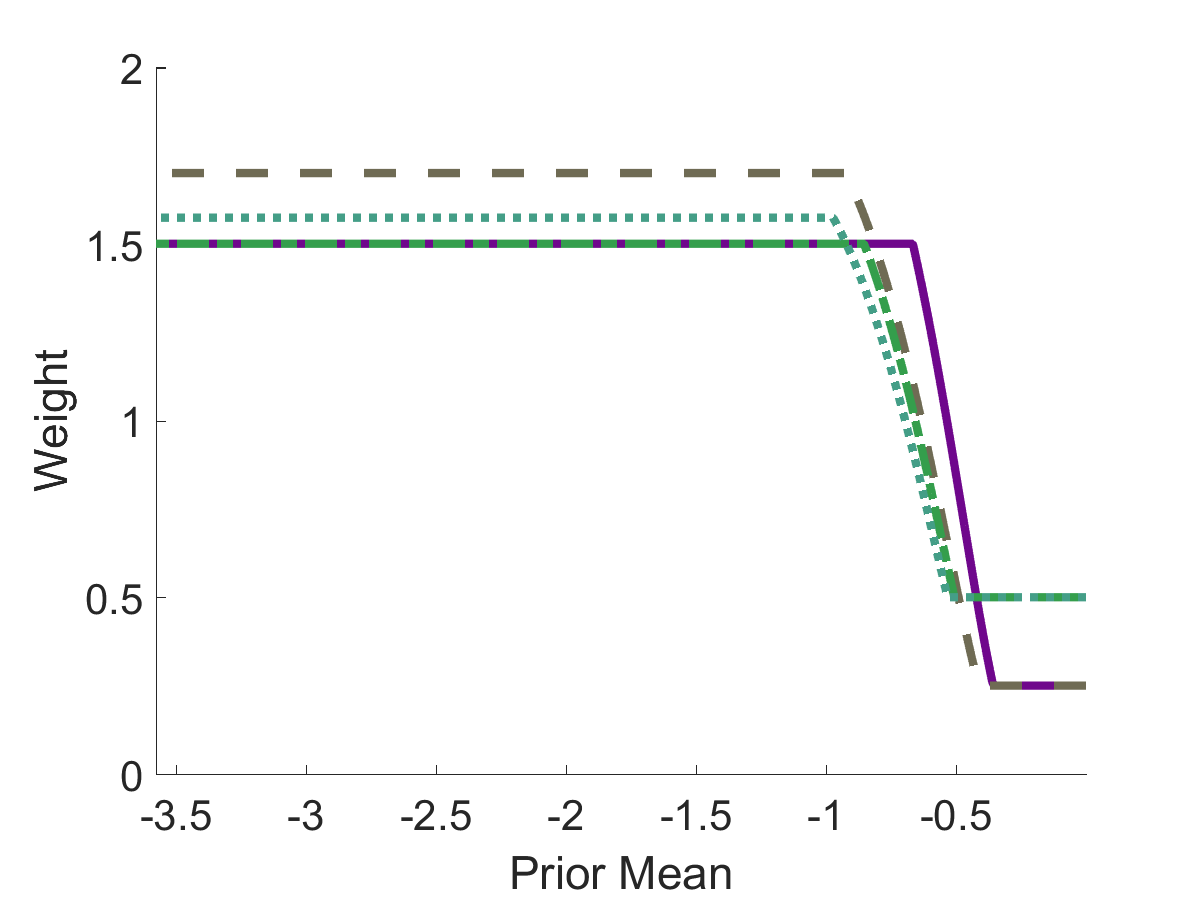}
  \caption{Bounded weights.}
\end{subfigure}
\caption{Upper and lower bounded weights.}
\label{l_u_weights}
\end{figure}

Lower bounded weights (Fig. \ref{l_u_weights}(a)) are flat near the two endpoints, and increase sharply in between. This may be surprising because flatness is not required in the optimization problem. In addition, increasing the lower bound also leads to a decreased upper bound, a property that is not immediately obvious theoretically.  Upper bounded weights (Fig. \ref{l_u_weights}(b))  have a similar but steeper shape. The weights are still flat, however, they are not automatically lower bounded (strictly above 0) anymore.   Finally, we vary both $u$ and $l$ (Fig. \ref{l_u_weights} (c)), with $l$ in $(0.25,0.5)$, and $u$ in $(1.5,1.75)$. The weights have the same general shape. 

\subsection{Power loss compared to Spj\o tvoll weights}
\label{sec:power_loss}

Since bounded weights are suboptimal to Spj\o tvoll weights when the model is correct, it is interesting to understand the loss in power. This may help to form heuristics for the choice of lower and upper bounds. On Fig. \ref{power_loss} we report the results of a simulation where  $J=10^4$, $q = 0.05/J$, the means are generated as iid negative absolute Gaussian, while the lower bounds are $(10^{-3}, 5\cdot10^{-3}, 10^{-2}, 5\cdot10^{-2}, 10^{-1}, 5\cdot10^{-1},0.9)$ and the upper bounds are $(2, 10, \infty)$. We define the power as the value of the maximized objective function in Eq. \eqref{monotone_weights}. We also display the power of the unweighted Bonferroni method (lowest horizontal line), and the optimal Spj\o tvoll weighted Bonferroni method (highest horizontal line). 

In this setting, an upper bound of two leads to a severe power loss of at least 50\%, regardless of the lower bound. However, an upper bound of 10 leads to only a small loss in power, while an upper bound of $\infty$ leads to nearly full power if the lower bound is at most $0.1$. We conclude that lower bounds below 0.1 and upper bounds above 10 seem to lead to a small loss in power. 

\begin{figure}
  \centering
  \includegraphics[scale=0.33]{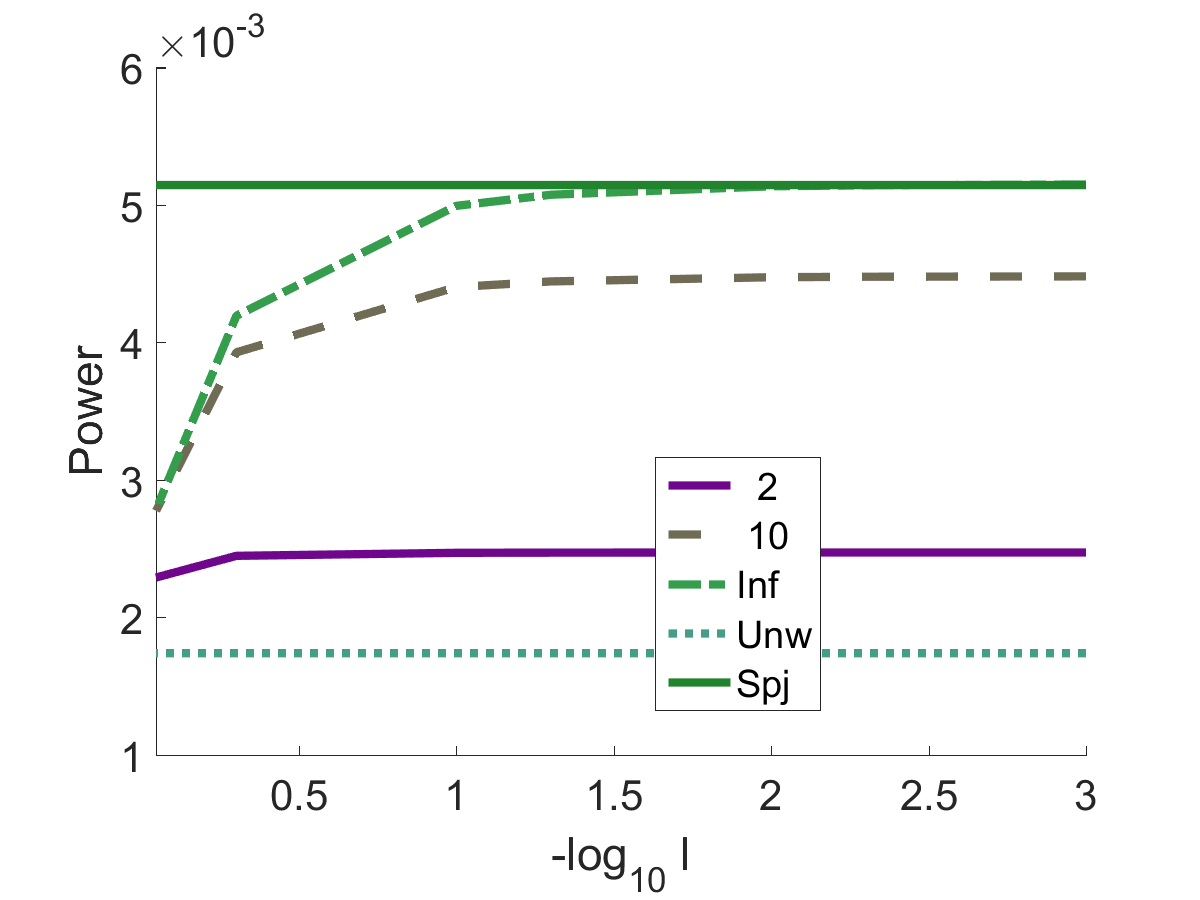}
  \caption{The loss in power of bounded weights compared to Spj\o tvoll weights.}
\label{power_loss} 
\end{figure}

\subsection{Data analysis example}
\label{sec:data}

To illustrate the empirical performance of Princessp, and specifically of bounded monotone weights, we analyze a standard set of datasets on genome-wide association studies (GWAS). We follow the protocol and methodology laid out in \cite{dobriban2015optimal}. There, we analyzed five studies on four complex human traits and diseases:  CARDIoGRAM and C4D for coronary artery disease \citep{schunkert2011large, coronary2011genome}, blood lipids  \citep{teslovich2010biological}, schizophrenia \citep{schizophrenia2011genome}, and estimated glomerular filtration rate (eGFR) creatinine \citep{kottgen2010new}.  

In addition to these, here we also include the 90Plus dataset \citep{deelen2014genome}, which compares the lifespan of a sample of Caucasian individuals living at least 90 years with matched controls (see the Supplement). These studies have $p$-values for the marginal association of 500,000--2.5 million single nucleotide polymorphisms (SNPs) to the outcome. More detail is provided in Sec. \ref{detailed_data}.

Testing SNPs one at a time using unweighted Bonferroni is typically the first analysis performed in genome-wide association studies \citep[see e.g.,][]{schunkert2011large, coronary2011genome,teslovich2010biological,schizophrenia2011genome,
kottgen2010new}. Therefore, in our data analysis we compare directly the state of the art unweighted Bonferroni method to weighted Bonferroni methods.

We analyze several pairs of these datasets, starting with those that were already included in \cite{dobriban2015optimal}. As a positive control for our method, we use CARDIoGRAM as prior information for C4D. Motivated by the Bayesian analysis of \cite{andreassen2013improved}, we use the blood lipids study as prior information for the schizophrenia study. Motivated by the comorbidity between heart disease and renal disease \citep{silverberg2004association}, we use the creatinine study as prior information for the C4D study. 

In addition to these pairs, we add four new examples: 

\begin{enumerate}

\item We switch the roles of the two heart disease studies, and use C4D as prior information for CARDIoGRAM. 
\item We use the 90Plus dataset as a target study, and check if studies on heart disease and schizophrenia can increase the number of hits. This is a challenging example, as the 90Plus data set has weak signal \citep{deelen2014genome}. We use the above three datasets as prior information as they seem to be the most promising from our prior work \citep{fortney2015genome}.
\end{enumerate}

\begin{table}[h]
\caption{Number of significant loci for five methods on three examples. Top: results pruned for linkage disequilibrium. Bottom: results without pruning. The score of each method is also reported. The weighting schemes compared are unweighted (Un); Spj\o tvoll (Spjot); filtering (Filter), Bayes, and Monotone (Mon) with $l=0.01,0.1,0.5$ and $u=\infty$. CG stands for CARDIoGRAM, SCZ for the schizophrenia study, and eGFRcrea for the creatinine study}
\begin{tabular}{rrr@{\hskip 0.3in}rrr@{\hskip 0.3in}rrr@{\hskip 0.3in}rrr}
&Un&Spjot&\multicolumn{3}{c}{Filter($-\log P$)}&\multicolumn{3}{c}{Bayes($\sigma$)}&\multicolumn{3}{c}{Mon($l$)}\\
Parameter&&&2&4&6&$0.1$ &1&10&$0.01$ &$0.1$&$0. 5$ \\
\\  
\multicolumn{9}{l}{Pruned}    \\  
CG $\to$ C4D &4  & 11   & 10  &   10   & 6& 10  &   8   & 4  &     11    &  10&9  \\
C4D $\to$ CG &9  & 14  & 22  & 11  &  3& 14    & 12  &   9  & 16   &  15 &  15  \\
Lipids  $\to$ SCZ    &  4  & 1  & 2  &2  &2&  1  & 1  & 5  &   2    &     2    &   4    \\
eGFRcrea $\to$ C4D    & 4   &  2   & 1 & 0  & 1& 2 & 4  & 4& 4 &   4      & 4   \\
CG $\to$ 90Plus    & 1   & 1 & 1& 0& 0 & 1& 1 & 1& 1 & 1 & 1 \\
C4D $\to$ 90Plus    & 1   & 1 & 1& 1& 0 & 1& 1 & 1& 1 & 1 & 1 \\
SCZ $\to$ 90Plus    & 1   & 0 & 0& 0 & 0& 0& 1 & 1& 1 & 1 & 1 \\
\\  
Score (Pruned)   & 0  &  -1 & -1 & -2 & -5&-1  & 1 &   1  &      1  &  1     &  2   \\     
\\
\multicolumn{9}{l}{Unpruned}  \\                                                                                                                             
CG$\to$ C4D  & 29   & 45 &  40   & 48 &34&  44   &  39 &29 &45 & 43&40  \\
C4D $\to$ CG &38  & 39  & 68 & 48 &30& 39 & 39  &38  &  53 & 52 & 48  \\
Lipids  $\to$ SCZ & 116 &214 & 217 & 96&39  & 214 & 223&123&217&220&225\\
eGFRcrea $\to$ C4D& 29 &18 &1 &0 &1 &18   &23 &  29 &23 & 25& 29     \\
CG $\to$ 90Plus    & 4   & 3 & 1& 0& 0& 4& 4 & 4& 4 & 4 & 4 \\
C4D $\to$ 90Plus    & 1   & 2 & 2& 1 & 0& 2& 1 & 1& 2 & 2 & 2 \\
SCZ $\to$ 90Plus    & 2   & 0 & 0& 0 & 0&0& 2 & 2& 2 & 2 & 2 \\  
\\  
Score (Unpruned)  & 0 &  1  & 1 & -2 &-6  &2  & 2 &1  & 3  &  3 &  4   \\   
\\
Total Score& 0  & 0 &  0 & -4 & -11  &1  & 3 & 2  &  4  & 4  &  6   \\         
\end{tabular}
\label{data_analysis_results}
\end{table}

For each pair of studies, we restrict to the SNPs that appear in both. For each SNP, we have a two-sided $p$-value $P_{0i}$ in the prior study, typically computed from a normal meta-analysis. From this we can back out the prior $z$-score $T_{0i}  = \Phi^{-1}(P_{0i}/2)$. We choose the sign so that $T_{0i}<0$, and optimize our weights for a a one-sided follow-up test in the same direction. This corresponds to a test for replicated sign. We do this because we have discussed monotone bounded weights for one-sided tests. We rescale the effect size $\mu_i = (N_i/N_{0i})^{1/2}T_{0i}$, as in \cite[][Sec. 6.2]{dobriban2015optimal} to estimate the effect size in the new study. Here $N_i$ and $N_{0i}$ are the current and prior sample size for the $i$-th variant, respectively.  We control the family-wise error rate at $\alpha=0.05$. 

We then compute optimal $p$-value weights with various methods. In \cite{dobriban2015optimal}, we reported the results of five weighting schemes: unweighted Bonferroni testing, as well as weighted Bonferroni testing using Spj\o tvoll weights, Bayes weights \citep{dobriban2015optimal}, filtering, and exponential weights \citep{roeder2006using}. Filtering selects all tests below a prior $p$-value cutoff, weighting them equally.  Here we report new results using Princessp, specifically bounded monotone weights with lower bounds $l=0.01, 0.1, 0.5$, and upper bound $u = \infty$. The results for upper bound $u= 10$ are very similar. We exclude exponential weights, as other methods performed better in our earlier work. 

As in \cite{dobriban2015optimal}, we prune the significant single nucleotide polymorphisms for linkage disequilibrium using the DistiLD database \citep{palleja2012distild}. We compute a score $s_{md}$ for each weighting method $m$ on each data set $d$. This is defined as $+1$ if the weighting method increases the number of hits compared to unweighted testing, 0 if it leaves it unchanged, and $-1$ otherwise. The score $s_{m}$ of a weighting method $m$ is the sum of scores over datasets. We compute scores separately for the pruned and unpruned analysis, and add them to find the final score. Table \ref{data_analysis_results} shows the number of significant loci. 

Princessp, i.e. monotone weighting, shows a good performance for all settings. In the pruned analysis, it has a score of two for the lower bound $l=0.5$. It increases the number of rejections in two out of the seven settings, and keeps it the same in the others.  In the unpruned analysis, it has a score of four with $l=0.5$, increasing the number of hits in four settings, and never decreasing it. Monotone weights with other lower bounds $l=0.1,0.01$ also lead to more hits in some cases, but they can also decrease the number of discoveries. 

Monotone weights perform well compared to the other methods. In particular, filtering decreases the number of hits in many cases, especially when the $p$-value cutoff is $10^{-4}$ or $10^{-6}$. Bayes weights, especially for $\sigma=1$, have a relatively stable performance. With $\sigma=1$, they decrease the number of hits in only two settings, while increasing it in five pairs. However, they decrease the number of hits more frequently for other $\sigma$ values. Moreover, monotone weights have a larger number of discoveries than Bayes weights. 

No weighting scheme increases the number of pruned SNPs in the 90Plus study. Even in this challenging example, however, monotone weights seem more stable than the others, as they do not decrease the number of hits.

\section{Optimization methods}
\label{sec:optim}

\subsection{General remarks}
\label{sec:optim_gen}

The Princessp optimization problems presented in Sec. \ref{cvx} are convex programs with convex inequality constraints. Here we propose using the log-barrier interior point method, which is a general approach to such problems \citep[e.g.,][Sec. 11]{boyd2004convex}. However, it is not immediate a priori that this approach will work well in our case. Indeed, we find that for monotone weights the straightforward application of the method leads to severely ill-conditioned KKT systems for large problems. Therefore, we develop a new subsampling method to avoid ill-conditioning. 

For concreteness, we will focus on Gaussian one-sided weights presented at the beginning of Sec. \ref{cvx}, but the more general case is similar; only the ROC function changes. To start, we discuss a few analytic properties of the optimization problem. The ROC function  $ f(w) = \Phi\left(\Phi^{-1}\left(qw\right)-\mu\right)   $
(plotted on Fig. \ref{fig:plot_obj})
is defined without ambiguity on $(0,1/q)$, and can be defined by continuity at the two endpoints. Indeed, as $w \to 0$, $\Phi^{-1}\left(qw\right) \to -\infty$, so that $f(w) \to 0$. Similarly $f(w)\to1$ at the other endpoint. 
Further, $f$ is concave on $[0,1/q]$, and importantly, its derivatives are convenient to compute. Indeed, denoting the $z$-score $z = \Phi^{-1}\left(qw\right)$, we have $  f'(w) = q \exp\left(\mu z - \mu^2/2\right) $
and
$  f''(w) = q^2 \mu \sqrt{2 \pi} \exp\left(z^2/2 + \mu z - \mu^2/2\right) $.
Thus $f$ is concave in $w$ for all $\mu \le 0$, and it is strictly concave for $\mu <0$. The second derivative is unbounded as $w \to 0$, because $\exp(z^2/2)$ is much larger than the other terms. 

\begin{figure}
  \centering
  \includegraphics[scale=0.35]{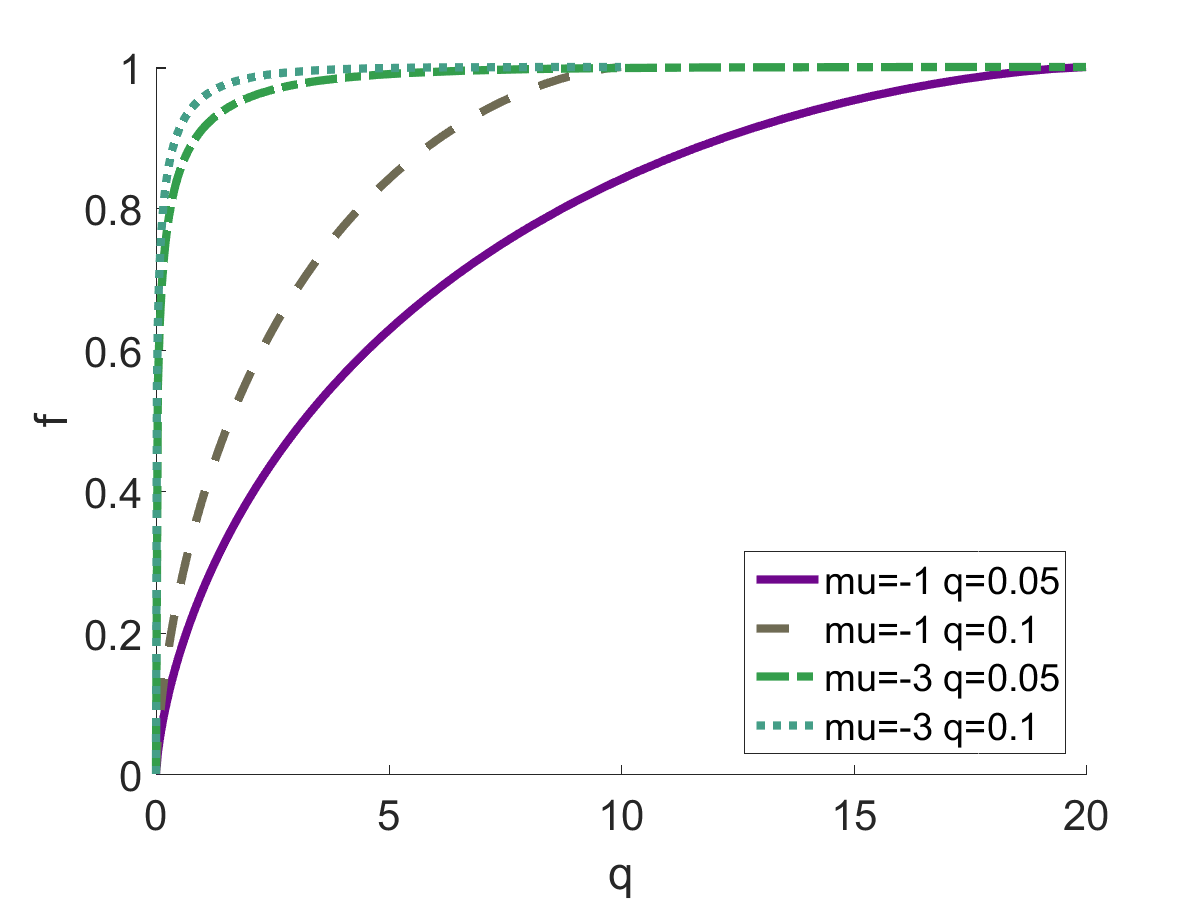}
\caption{Plot of the function $f$ appearing in the objective for various parameters $\mu,q$.}
\label{fig:plot_obj}
\end{figure}

The log-barrier method solves a sequence of equality-constrained problems indexed by $l=1,2,\ldots$, replacing the inequality constraints by a penalty $-(1/t_l)$ $\sum_k \log (-f_k(w))$ for an increasing sequence of $t_l$. The equality-constrained problems are solved using Newton's method. Under the assumptions stated at the beginning of Sec. \ref{cvx}, it follows from the general convergence analysis for the barrier method, see p. 577, Sec. 11.3.3, in \cite{boyd2004convex} that the barrier method converges for solving the convex constrained Spj\o tvoll weights optimization problem. Since we assumed that an optimal $w^*$ exists, strong convexity is ensured by compactness of the set $S$.

In the two-sided normal testing example, the log-barrier method converges when restricted to the region $w_i < 1/(2q)-\ep$ for $\ep>0$, where the problem is strongly convex.

\subsection{Monotone weights}
\label{sec:mon_w}

In this section we explain our method for computing monotone bounded weights (Eq. \eqref{monotone_weights}). To enable efficient computation for problems with tens of millions of weights, we need to exploit the tridiagonal structure of the Hessian and solve the KKT systems arising the in the Newton steps in $O(J)$-time directly. For this we need to implement several steps of the optimization method (Algorithm \ref{IP}). 

The method involves a number of optimization parameters, for which we use the default settings from \cite{boyd2004convex} Sec. 11. 
In addition, we need to choose a strictly feasible starting point $w^*_0$.  For this, we let $u = (1,2,\ldots, J)^\top$, and $v$ be the mean-centered version of $u$. Then we let 
\begin{equation}
\label{start_point}
w^*_0 = \textnormal{e} + \delta v 
\end{equation}
where $ \textnormal{e}  = (1,1,\ldots,1)^\top$ is the all ones vector and $\delta>0$ is a small constant such that the vector $w^*_0$ is strictly feasible. This is clearly possible if $0 \le l < 1 < u \le 1/q$.

\begin{algorithm}
\caption{Barrier Method for Monotone Weights}\label{IP}
\begin{algorithmic}[1]
\Procedure{Monotone Weights}{}
\State Given  $0 > \mu_1 \ge \mu_2 \ge  \ldots \ge  \mu_J$, $q$, $l$, $u$
\State $w^*_0 \gets \text{strictly feasible starting point from \eqref{start_point}}$
\State Set the index $k \gets 0$
\State Set optimization parameters $t >0, \mu >1, \varepsilon >0$
\Loop
\State \text{Centering step: Solve \eqref{centering_step} with parameter $t$, starting from $w^*_k$}
\State $w \gets w^*_k$
\Loop
\State Compute Newton step $\Delta w_{nt}$ by solving system \eqref{KKT_sys}
\State Compute Newton decrement $\lambda(w)$ from \eqref{Newton_dec}
\State \textbf{quit} loop if $\lambda^2/2 < \varepsilon$
\State Line search: Choose step size $s$ by backtracking line search (parameters $\alpha, \beta$)
\State Update $w \gets w + s \Delta w_{nt}$
\EndLoop
\State \textbf{end loop}
\State Update $w^*_{k+1} \gets w$, $k \gets k+1$
\State \textbf{quit} loop if $J/t < \varepsilon$
\State Increase $t \gets \mu t$
\EndLoop
\State \textbf{end loop}
\State \textbf{return}  $w^*_{\varepsilon} \gets w^*_{k+1}$
\EndProcedure
\end{algorithmic}
\end{algorithm}

\subsubsection{Centering problem}
From now on we flip the sign of the objective, so that we are minimizing a convex function. For a penalty parameter $t>0$, the logarithmic barrier problem---or centering step---is the convex program:
\begin{align}
\label{centering_step}
 &\min_{w\in [0,1/q]^J} \mbox{  } -t \sum_{i=1}^{J} \Phi\left(\Phi^{-1}\left(qw_i\right)-\mu_i\right) - \sum_{i=0}^{J} \log(w_{i+1} - w_i) \\ 
 & \mbox{s.t. }  \sum_{i=1}^{J} w_i = J.  
\end{align}
Here we define the constants $w_0 := l$, $w_{J+1} := u$ for brevity. 

We now show that the Karush-Kuhn-Tucker (KKT) matrix is a sum of a tridiagonal and a rank one matrix, so that the KKT system can be solved in $O(J)$ time. Let $g$ be the objective function,  $\nabla g$ its gradient, and $\nabla^2 g$ its Hessian. Let us also denote by $\Delta w_{nt}$ the Newton step for $w$, and by $\nu$ the scalar dual variable corresponding to the sum constraint  \citep[][p. 577]{boyd2004convex}. Then the KKT system for finding $(\Delta w_{nt}, \nu)$ at a particular $w$ (suppressed for ease of notation) is
\begin{equation}
\label{KKT_sys}
\begin{bmatrix} \nabla^2 g & \textnormal{e} \\ \textnormal{e}^\top & 0 \end{bmatrix}  \left[ \begin{array}{c} \Delta w_{nt} \\ \nu \end{array} \right]  = \left[ \begin{array}{c} - \nabla g \\ 0 \end{array} \right] .
\end{equation}
Assuming $\nabla^2 g$ is invertible, standard linear algebra shows that the solution has the form:
\begin{align*}
\label{KKT_sol}
 \nu =  & - \left( \textnormal{e}^\top  \left(\nabla^2 g\right)^{-1} \textnormal{e} \right)^{-1}  \left( \textnormal{e}^\top  \left(\nabla^2 g\right)^{-1} \nabla g \right)^{-1} \\ 
 \Delta w_{nt} = & -  \left(\nabla^2 g\right)^{-1}  \left( \nu \,  \textnormal{e} + \nabla g \right) .  
\end{align*}
The Newton decrement is defined as 
\begin{equation}
\label{Newton_dec} 
\lambda(w) = \left( \Delta w_{nt}^\top \left(\nabla^2 g\right)^{-1} \Delta w_{nt} \right)^{1/2}.
\end{equation}
To solve the Newton system, we must calculate $\left(\nabla^2 g\right)^{-1} \textnormal{e}$ and $\left(\nabla^2 g\right)^{-1}  \nabla g$. First, the components of the gradient are:
$$
\frac{\partial g}{\partial w_i} =  - q t \exp\left(\mu_i z_i - \frac{\mu_i^2}{2}\right) - \frac{1}{w_i - w_{i-1}} - \frac{1}{w_i - w_{i+1}}
$$
where we denoted the $z$-score $z_i = \Phi^{-1}\left(qw_i\right)$.
Next, the diagonal components of the Hessian are
$$
\frac{\partial^2 g}{\partial w_i^2} =  - q^2 t \mu_i \sqrt{2 \pi} \exp\left(\frac{z_i^2}{2} + \mu_i z_i - \frac{\mu_i^2}{2}\right)  + \frac{1}{(w_i - w_{i-1})^2} + \frac{1}{(w_i - w_{i+1})^2}
$$
and the off-diagonal terms vanish except those on the band just one-off the diagonal, which are  $
\partial^2 g/(\partial w_i \partial w_{i+1}) =  -  1/(w_i - w_{i+1})^2$.
Hence, the Hessian matrix $\nabla^2 g$ is tridiagonal. With the notation $h_i = 1/(w_i - w_{i+1})^2$, the Hessian has the form
$ \nabla^2 g = H  + D $
where $H$ is the tridiagonal matrix
$$
H = 
 \begin{pmatrix}
  h_0 + h_1 & -h_1 & \cdots & 0\\
  -h_1 & h_1+h_2 & \cdots & 0 \\
  \vdots  & \vdots  & \ddots & \vdots  \\
  0 & 0 & \cdots & h_J + h_{J+1}
 \end{pmatrix}
$$
 and $D$ is the diagonal matrix
$
D = - q^2t\sqrt{2 \pi}\cdot \textnormal{Diag}\left( \mu_i \exp\left(z_i^2/2 + \mu_i z_i - \mu_i^2/2\right) \right).
$ 
Finally $\nabla^2 g = H + D$ is invertible because it is the sum of two positive definite matrices. Indeed the diagonal terms in $D$ are strictly positive if $\mu_i <0$. Further $H$ is positive definite, as for a vector $x = (x_1,\ldots, x_J)$, the quadratic form $x^\top H x$ equals 
$x^\top H x = h_1 x_1^2 + \sum_{i=1}^{J-1} h_{i+1} (x_i - x_{i+1})^2 + h_{J+1} x_J^2.$ 
Now $h_i >0$ for all $i$, so $x^\top H x =0$ implies that all $x_i$ are equal to 0. This shows that $H$ is positive definite.
 
 Hence, we have shown that $\nabla^2 g$ is positive definite, and thus invertible. Therefore, the KKT system involves the solution of a tridiagonal-plus-rank 1 linear system, taking $O(J)$ time. This is implemented using a standard sparse tridiagonal linear solver in MATLAB, which is stable since $H$ is positive definite. 

\subsubsection{Subsampling}
\label{refinements}

Based on the earlier simulation results (e.g., Fig. \eqref{l_u_weights}), and on intuition from isotonic regression, we expect that the solution may have many equal terms, lying on the boundary of the feasible set. It is known that this can lead to an ill-conditioned KKT system \cite[e.g.,][Ch. 17]{nocedal2006numerical}. 

To deal with this problem, we propose a subsampling method. If $J$ is larger than a constant $L$, we subsample $L$ means evenly spaced among the indices $1,\ldots,J$. Here we choose $L=10^4$. To avoid ties, we then subsample the remaining means $\mu_j$ starting from $\mu_1$ and discard the remaining terms $\mu_2,\mu_3,\ldots$, until the first index $j$ where $\mu_j < \mu_1 - c_0$. We then repeat this starting from $j$, and so on. We choose the small constant $c_0=10^{-6}$. We use the barrier method to compute weights on the subsample, and then interpolate linearly to the remaining means.   

This new subsampling approach is crucial to enable the application of Princessp to large-scale problems with millions of hypotheses. We also show in Section \ref{subsample} that subsampling does not affect too much the accuracy of the weights on smaller problems. On larger problems, it usually avoids the ill-conditioned KKT systems encountered by the naive barrier method.

\subsection{Experiments}

In this section we report the results of several experiments with our optimization method. 

\subsubsection{Accuracy of subsampling}
\label{subsample}

In Section \ref{refinements} we introduced a subsampling method to avoid the ill-conditioning of the KKT system. Here we show that the method does not lose too much accuracy compared to the full barrier method.  We set $J = 2\cdot10^4$, $q = 5\cdot10^{-3}$, draw random $\mu = -(|Z_1|,\ldots, |Z_J|)$ where $Z_i$ are iid standard normal, and set the bounds $l=0$, $u = \infty$. We compute weights using the standard barrier method ($w_b$), and using the barrier method with subsampling ($w_s)$. We then compute the absolute error $|w_b-w_s|$ and relative error (defined pointwise as $|w_b(i)-w_s(i)|/|w_b(i)|$). The two weighting schemes (Fig. \ref{fig:sub})  agree within at least two significant digits in terms of absolute error for all means. The relative error can be as large as $10^{-1}$, but only when the absolute error is smaller than $10^{-6}$, so that this still translates to a good accuracy. 

\begin{figure}
\centering
\begin{subfigure}{.5\textwidth}
  \centering
  \includegraphics[scale=0.33]{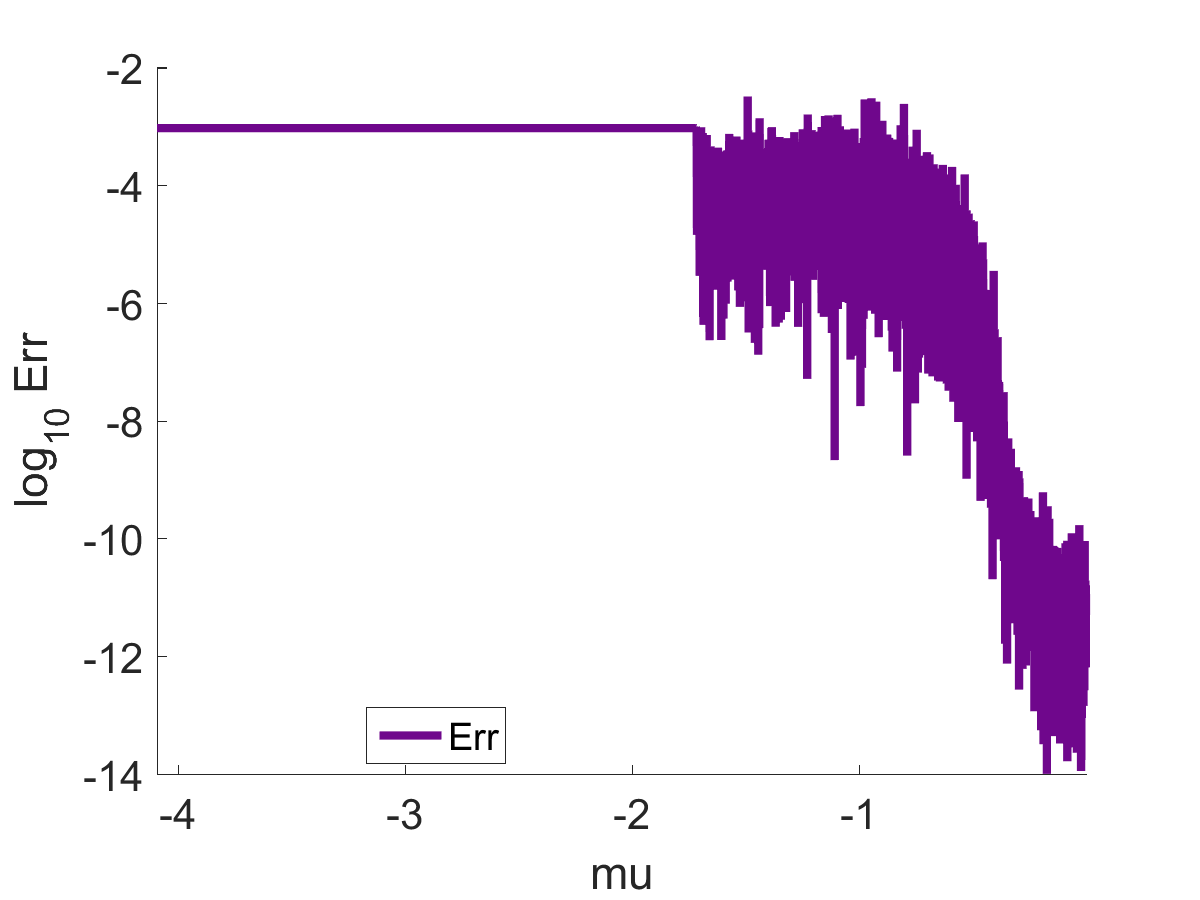}
  \caption{Absolute error.}
\end{subfigure}%
\begin{subfigure}{.5\textwidth}
  \centering
  \includegraphics[scale=0.33]{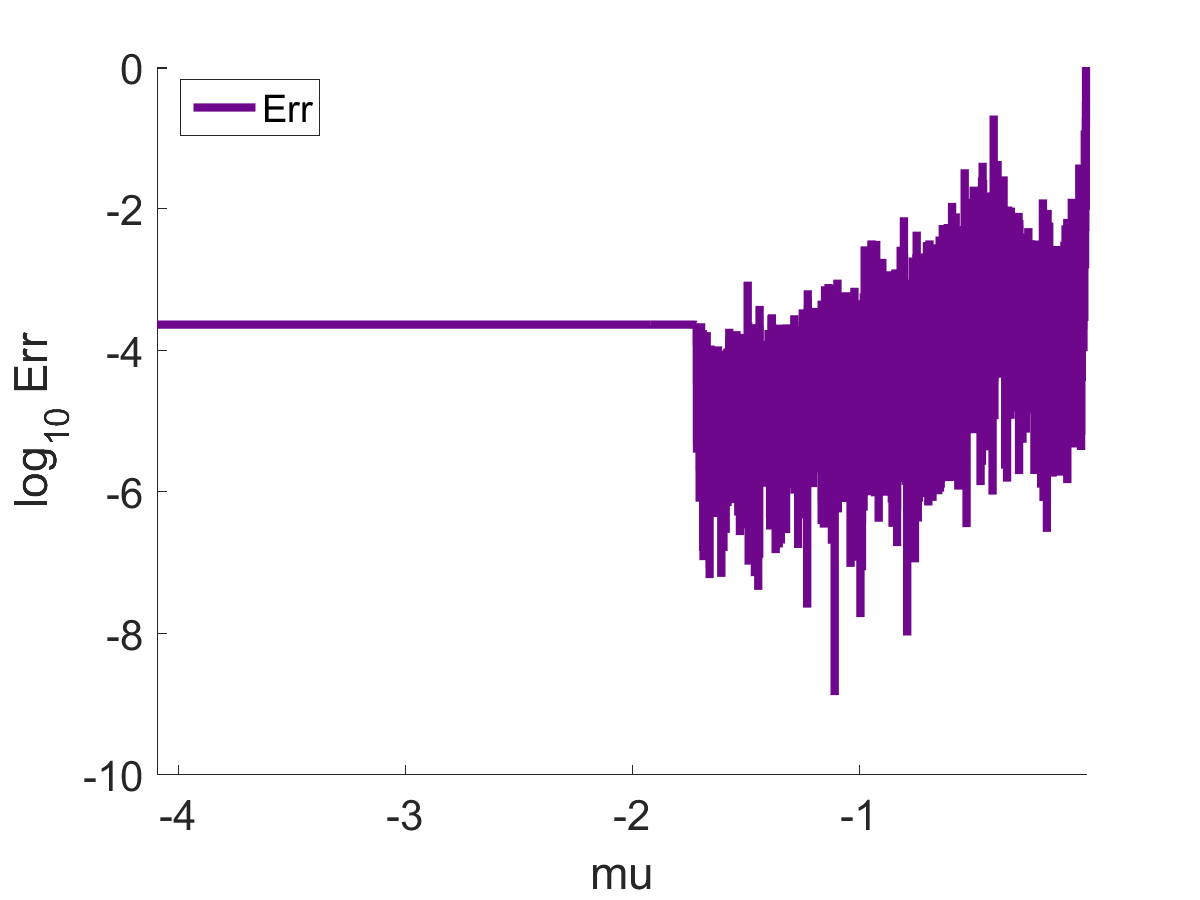}
  \caption{Relative error.}
\end{subfigure}
\caption{Accuracy of subsampling.}
\label{fig:sub}
\end{figure}
    
\subsubsection{Comparison with Spj\o tvoll}

We showed in Proposition \ref{mon_spjot} that Spj\o tvoll weights are monotone if $q$ is small. Thus we can compare the two methods on a problem where they should give the same results. We set $J = 10^3$, $q = 10^{-7}$, draw random $\mu = -(|Z_1|,\ldots, |Z_J|)$ where $Z_i$ are iid standard normal, and set $l=0$, $u = \infty$. Spj\o tvoll weights \eqref{Spjotvoll_weights} and monotone weights \eqref{monotone_weights} are visually indistinguishable (Fig. \ref{fig:spjot_mon}(a)).

\begin{figure}
\centering
\begin{subfigure}{.5\textwidth}
  \centering
  \includegraphics[scale=0.33]{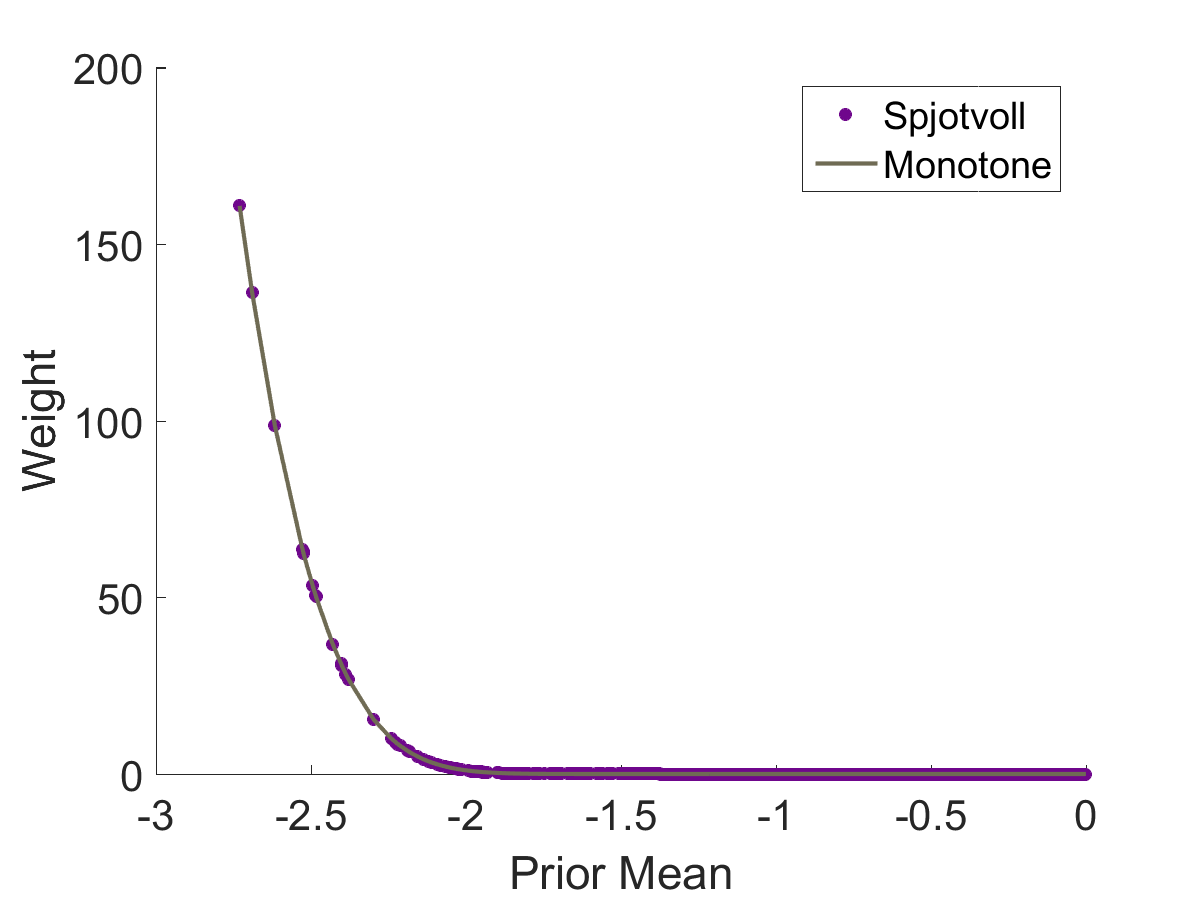}
  \caption{Spj\o tvoll and Monotone weights.}
\end{subfigure}%
\begin{subfigure}{.5\textwidth}
  \centering
  \includegraphics[scale=0.33]{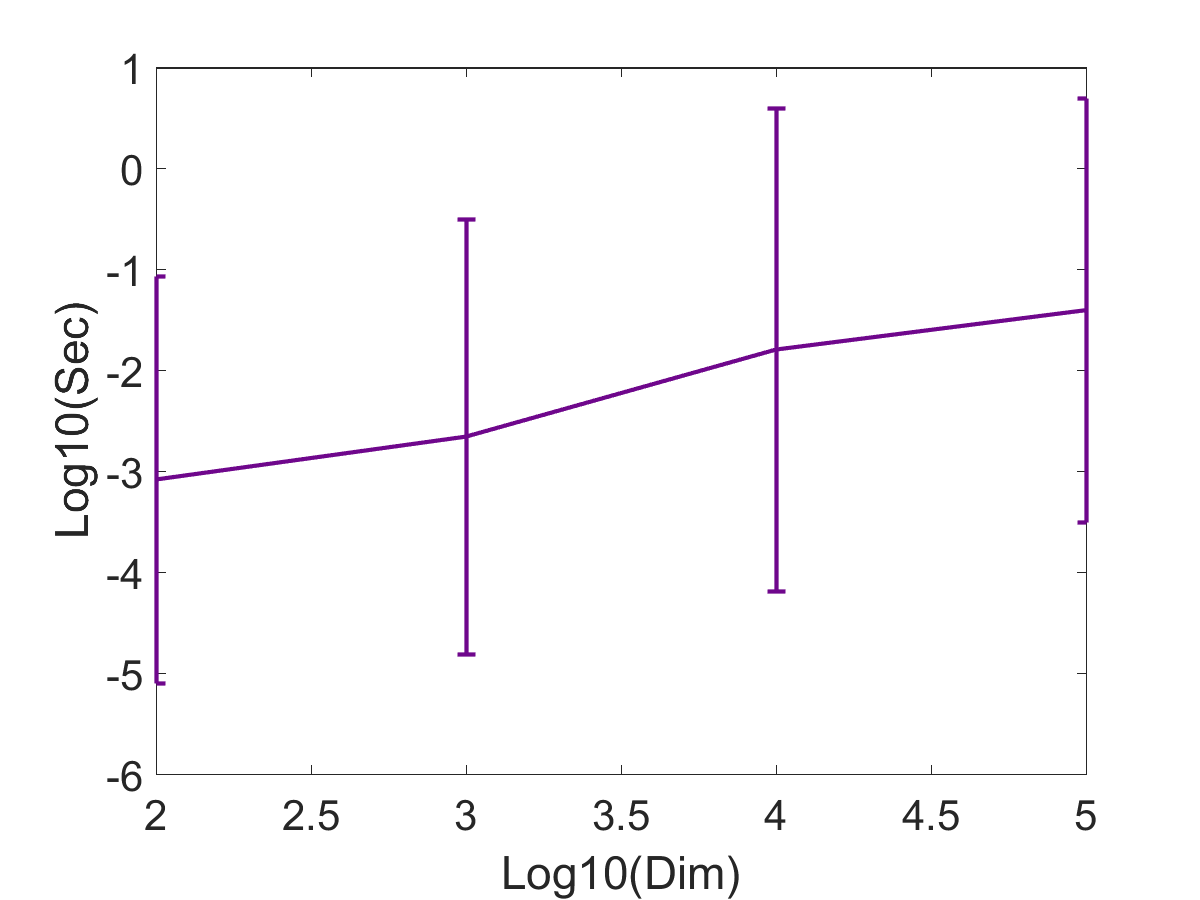}
  \caption{Timing.}
\end{subfigure}
\caption{(a) Spj\o tvoll and Monotone weights. (b) Base 10 log running time of barrier method. Averages and two standard errors over 50 independent MC trials.}
\label{fig:spjot_mon}
\end{figure}

\subsubsection{Running time}

To test the running time of our method, we vary $J$ in the range of $10^2$--$10^5$, choosing $\mu$ as above. The parameters $q$ and $l$ are chosen randomly as $q  = U/10$, $l = V/10$, where $U$, $V$ are independent uniform random variables on the unit interval. We also take $u=\infty$. The average running times over 50 simulations are shown in Fig. \ref{fig:spjot_mon}(b), and they are below a second even for the largest $J$.

\section{Discussion}

In this paper we developed the Princessp method, employing convex optimization for large-scale weighted Bonferroni multiple testing. Our approach enabled many different constraints. We found that bounded monotone weights perform well empirically in the analysis of genome-wide association studies.%

In particular, it appears that imposing a lower bound such as $w_i \ge 1/2$ can improve the empirical performance of $p$-value weighting methods. The reason is that the current $p$-values $P_i$ are multiplied by a small constant---at most two in the example above---and hence significant effects stay significant. Many of the current state of the art weighting methods, such as Spj\o tvoll, exponential, or Bayes weighting, do not have this property. Therefore, practitioners using them risk losing significant $p$-values. Bounded weights offer a principled way to avoid this problem. We think that this observation may go a long way in making $p$-value weighting methods more routinely applicable. 

\section*{Acknowledgements}

We thank Art Owen and Stuart Kim for discussion, and David Donoho for feedback on an earlier version of the manuscript. We acknowledge financial support from NSF grants DMS-1418362 and DMS-1407813, and an HHMI International Student Research Fellowship.

\section{Data sources}

\label{detailed_data}
\subsection{CARDIoGRAM and C4D}

CARDIoGRAM is a meta-analysis of 14 coronary artery disease genome-wide association studies, comprising 22,233 cases and 64,762 controls of European descent \citep{schunkert2011large}. The study includes 2.3 million single nucleotide polymorphisms. In each of the 14 studies and for each single nucleotide polymorphism, a logistic regression of coronary artery disease status was performed on the number of copies of one allele, along with suitable controlling covariates.
The resulting effect sizes were combined across studies using fixed effects or random effects meta-analysis with inverse variance weighting. Finally, two-sided normal $p$-values were computed. 

C4D is a meta-analysis of 5 heart disease genome-wide association studies, totalling 15,420 coronary artery disease cases and  and 15,062 controls \citep{coronary2011genome}. The samples did not overlap those from CARDIoGRAM. The analysis steps were similar to CARDIoGRAM.

The consortia require that the following acknowledgment be included:  Data on coronary artery disease / myocardial infarction have been contributed by CARDIoGRAMplusC4D investigators and have been downloaded from www.CARDIOGRAMPLUSC4D.ORG.

\subsection{Chronic Kidney Disease Consortium}

This is a genome-wide association study of kidney traits in
67,093 participants of European ancestry from 20 population-based cohorts \citep{kottgen2010new}. Estimated glomerular filtration rate creatinine was the trait with the largest sample size. There is no reported overlap with the samples from C4D. The analysis steps were similar to the previous two studies.

\subsection{Blood Lipids}
This is a genome-wide association study of blood lipids in a sample from European populations \citep{teslovich2010biological}. Triglyceride levels were one of the traits, with sample size 96,598, chosen here out of all lipids because of its previous appearance in \cite{andreassen2013improved}. Standard protocols for genome-wide association studies were used: linear regression analysis controlling for study-specific covariates, combined using fixed-effects meta-analysis. 
 
\subsection{Psychiatric Genomics Consortium - Schizophrenia}

This is a mega-analysis, which uses the raw data not just summaries of other studies, combining genome-wide association study data from 17 separate studies of schizophrenia, with a total of 9,394 cases and 12,462 controls \citep{schizophrenia2011genome}. They tested for association using logistic regression of schizophrenia status on the allelic dosages. 
 The overlap with the blood lipids study consists of 1,459 controls, which amounts to $12\%$ of controls in the schizophrenia study. The overlapping controls are from the British 1958 Birth Cohort of the Wellcome Trust Case Control Consortium. 
 
\subsection{90Plus Study - Aging}

\cite{deelen2014genome} performed a genome-wide association meta-analysis of 5406 long-lived individuals of European descent (aged at least 90 years). They combined the results of 14 studies originating from seven European countries. The analysis steps were similar to the ones above. This dataset was used in \cite{fortney2015genome}, and it is the most conveniently available aging data set among those analyzed in that paper.

{\small
\setlength{\bibsep}{0.2pt plus 0.3ex}
\bibliographystyle{plainnat-abbrev}
\bibliography{weighted}
}

\end{document}